\definecolor{forestgreen}{RGB}{34,139,34}
\newtheorem{theorem}{Theorem}
\newtheorem{proposition}[theorem]{Proposition}
\newtheorem*{remark}{Remark}
\newcommand{\graysquare}{\textcolor{gray}{\blacksquare}}
\newcommand{\blackdiamond}{\mathbin{\rotatebox[origin=c]{45}{$\blacksquare$}}}
\newcommand{\graydiamond}{\mathbin{\rotatebox[origin=c]{45}{$\graysquare$}}}
\xpatchcmd{\proof}{\@addpunct{.}}{\@addpunct{:}}{}{}
\newcommand{\vast}{\bBigg@{3}}
\newcommand{\Vast}{\bBigg@{4}}
\newcommand*{\indep}{%
  \mathbin{%
    \mathpalette{\@indep}{}%
  }%
}
\newcommand*{\nindep}{%
  \mathbin{
    \mathpalette{\@indep}{\not}
  }%
}
\newcommand*{\@indep}[2]{%
  \sbox0{$#1\perp\m@th$}
  \sbox2{$#1=$}
  \sbox4{$#1\vcenter{}$}
  \rlap{\copy0}
  \dimen@=\dimexpr\ht2-\ht4-.2pt\relax
  \kern\dimen@
  {#2}
  \kern\dimen@
  \copy0 
} 
\DeclareMathOperator{\E}{\textnormal{\mbox{E}}}
\tikzset{
  >=stealth',
  true/.style={
    rectangle,
    draw=black, very thick,
    text width=6.;/ mem,
    minimum height=2em,
    text centered,
    fill=gray, opacity = 0.5},
  punkt/.style={
    rectangle,
    rounded corners,
    draw=black, very thick,
    text width=6.5em,
    minimum height=2em,
    text centered},
  est/.style={
    circle,
    draw=black, very thick,
    text centered},
  shade/.style={
    circle,
    draw=black, very thick, fill=gray!50,
    text centered},
  weight/.style={
    circle,
    draw=black, very thick,
    text width=6.5em,
    minimum height=2em,
    text centered},
  pil/.style={
    ->,
    thick,
    shorten <=2pt,
    shorten >=2pt,},
  double/.style={
    <->,
    thick,
    shorten <=2pt,
    shorten >=2pt,},
  dash/.style={
    dashed,
    thick,
    shorten <=2pt,
    shorten >=2pt,},
  dashdouble/.style={
    <->,
    dashed,
    thick,
    shorten <=2pt,
    shorten >=2pt,}
}
\def\@seccntformat#1{\@ifundefined{#1@cntformat}%
   {\csname the#1\endcsname\quad}  
   {\csname #1@cntformat\endcsname}
}
\let\oldappendix\appendix 
\renewcommand\appendix{%
    \oldappendix
    \newcommand{\section@cntformat}{\appendixname~\thesection\quad}
}
\newcommand{\copyrightstatement}{
    \begin{textblock}{0.84}(0.08,0.93)    
         \noindent
         \footnotesize
         This DRAFT manuscript presents WORK IN PROGRESS. \\
         Comments and reports of mistakes are very much welcome at \href{mailto:issa\_dahabreh@brown.edu}{issa\_dahabreh@brown.edu}.
    \end{textblock}
}
\def\paperversionmajor{24}
\def\paperversionminor{0}
\begin{document}

\title{Towards causally interpretable meta-analysis: transporting inferences from multiple randomized trials to a new target population}

\author[1-3]{Issa J. Dahabreh}
\author[4]{Lucia C. Petito}
\author[1]{Sarah E. Robertson}
\author[3,5,6]{Miguel A. Hern\'an}
\author[7]{Jon A. Steingrimsson}

\affil[1]{Center for Evidence Synthesis in Health and Department of Health Services, Policy \& Practice, School of Public Health, Brown University, Providence, RI}
\affil[2]{Department of Epidemiology, School of Public Health, Brown University, Providence, RI}
\affil[3]{Department of Epidemiology, Harvard T.H. Chan School of Public Health, Boston, MA}
\affil[4]{Department of Preventive Medicine, Feinberg School of Medicine, Northwestern University, Chicago, IL}
\affil[5]{Department of Biostatistics, Harvard School of Public Health, Boston, MA}
\affil[6]{Harvard-MIT Division of Health Sciences and Technology, Boston, MA}
\affil[7]{Department of Biostatistics, School of Public Health, Brown University, Providence, RI}

\copyrightstatement

\maketitle{}
\thispagestyle{empty}

\clearpage
\noindent
\textbf{Type of manuscript:} Original research 

\vspace{0.1in}
\noindent
\textbf{Running head:} Causally interpretable meta-analysis 

\vspace{0.1in}
\noindent
\textbf{Conflicts of interest:} None declared

\vspace{0.1in}
\noindent
\textbf{Acknowledgments:} The data analyses in our paper used HALT-C data obtained from the National Institute of Diabetes and Digestive and Kidney Diseases (NIDDK) Central Repositories. This paper does not necessarily reflect the opinions or views of the HALT-C study, the NIDDK Central Repositories, or the NIDDK.

\vspace{0.1in}
\noindent
\textbf{Sources of funding:} This work was supported in part by Patient-Centered Outcomes Research Institute (PCORI) awards ME-1306-03758 and ME-1502-27794 (Dahabreh); National Institutes of Health (NIH) grant R37 AI102634 (Hern\'an); and Agency for Healthcare Research and Quality (AHRQ) National Research Service Award T32AGHS00001 (Robertson). The content of this paper does not necessarily represent the views of the PCORI, its Board of Governors, the Methodology Committee, the NIH, or AHRQ.

\vspace{0.1in}
\noindent
\textbf{Computer code:} We have provided \texttt{R} code to implement the methods described in this paper.

\thispagestyle{empty}

\clearpage
\setcounter{page}{1}

\vspace*{0.25in}

\begin{abstract}
\noindent
\linespread{1.7}\selectfont
We take steps towards causally interpretable meta-analysis by describing methods for transporting causal inferences from a collection of randomized trials to a new target population, one-trial-at-a-time and pooling all trials. We discuss identifiability conditions for average treatment effects in the target population and provide identification results. We show that assuming inferences are transportable from all trials in the collection to the same target population has implications for the law underlying the observed data. We propose average treatment effect estimators that rely on different working models and provide code for their implementation in statistical software. We discuss how to use the data to examine whether transported inferences are homogeneous across the collection of trials, sketch approaches for sensitivity analysis to violations of the identifiability conditions, and describe extensions to address non-adherence in the trials. Last, we illustrate the proposed methods using data from the HALT-C multi-center trial.\\

\noindent
\textbf{Keywords:} meta-analysis; evidence synthesis; causal inference; conditional average treatment effect; inverse probability weighting.

\end{abstract}

\clearpage
\section*{Introduction}
\setcounter{page}{1}

For many causal questions, we have effect estimates from several randomized trials that recruit individuals from different underlying populations, typically by convenience sampling. We would then like to synthesize the trial evidence and transport inferences to some target population that is chosen on substantive grounds \cite{dahabreh2019commentaryonweiss}. 

Studies that synthesize findings across multiple trials are known as ``meta-analyses'' \cite{cooper2009handbook}. Most meta-analyses combine the trial-specific effect estimates to obtain a pooled effect estimate using either random effects models or common effect models (often referred to as ``fixed effect'' models) \cite{laird1990some, higgins2009re, rice2018re}. A commonly overlooked problem, however, is that standard meta-analyses may produce results that do not have a clear causal interpretation when each trial includes individuals from a different population and the treatment effect varies across those populations. 

As an example, consider the effect of angiotensin-converting enzyme (ACE) inhibitors, compared with other anti-hypertensive agents, in individuals with nondiabetic chronic kidney disease \cite{giatras1997, jafar2001angiotensin, jafar2003}. A meta-analysis of 11 trials found that ACE inhibitors improved renal function and reduced progression to end-stage renal disease; the benefits of ACE inhibitors were substantially greater among individuals with high urine protein excretion than those with lower excretion \cite{jafar2003, schmid2004meta}. Across all trials, only 6\% of participants were African American and in only 34\% of participants the cause of kidney disease was hypertensive nephrosclerosis. Now suppose that we are interested in the effect of ACE inhibitors in a target population of non-diabetic African American adults with chronic kidney disease \cite{wright1996design, wright2002effect}. In African Americans, chronic kidney disease is most often due to hypertension; and proteinuria is more common compared with whites \cite{hsu2003racial, mcclellan2011albuminuria}. In addition, blood pressure in hypertensive African Americans is generally considered to respond better to calcium channel blocker or diuretic monotherapy, rather than ACE inhibitor treatment \cite{mann2014antihypertensive}. Given these differences and the under-representation of African Americans in the trials (in fact, 7 of the 11 trials did not enroll any African Americans), the estimate from a conventional meta-analysis of the 11 trials would have limited relevance for the target population.

In fact, whenever treatment effects are heterogeneous over variables that vary in distribution across trials (as is the case in the ACE inhibitor example) standard meta-analyses of treatment effects from the trials do not produce estimates with a clear causal interpretation for any reasonable target population. The problem arises because meta-analyses combine trial-specific effect estimates using weights that reflect the precision of the estimates (and their variability, for random effects meta-analyses) rather than their relevance to the target population \cite{eddy1992meta}. In fact, most meta-analyses do not even specify their target population, regardless of whether the meta-analysis is based on summary statistics or on individual participant data from each trial. Similar concerns about conventional meta-analysis methods have been recently summarized, independently of our work, in \cite{manski2019meta}.

In this paper, we take steps towards causally interpretable meta-analysis by proposing methods for extending inferences about average treatment effects from a collection of randomized trials to a target population, one-trial-at-a-time and pooling all trials. Our approach requires individual participant data from the randomized trials together with baseline covariate data from a random sample from the target population. The latter allows us to account for differences in the distribution of effect modifiers between the trial-specific populations and the target population. Besides describing identification results, we show that assuming inferences are transportable from all trials in the collection to the same target population has implications for the law underlying the observed data. We propose average treatment effect estimators that rely on different working models and provide code for their implementation in statistical software.  We discuss how to use the data to examine whether transported inferences are homogeneous across the collection of trials, sketch approaches for sensitivity analysis to violations of the identifiability conditions, and describe extensions to address non-adherence in the randomized trials. Last, we illustrate the proposed methods using data from the HALT-C multi-center trial.

\section*{Causal quantities of interest}

Suppose we have individual-level data from participants in a collection of randomized trials $\mathcal S = \{ 1, \ldots, m \}$. For simplicity, we assume that all trials estimate the effect of assignment to treatment $Z$ that takes values in the finite set $\mathcal Z$ on an outcome $Y$ measured at some fixed follow-up time (we do not consider failure-time outcomes). If some treatments of interest are only assessed in a subset of the available trials, we can introduce treatment pair-specific subsets of the collection $\mathcal S(z,z^\prime) \subseteq \mathcal S$ that consist of those trials in $\mathcal S$ that compared treatments $z$ and $z^\prime$. With this change, our results can be extended to so-called ``network meta-analyses'' \cite{lumley2002network}, in which not all treatments are assessed in all trials.

For each participant in each of the trials in the collection $\mathcal S$, we have information on the trial $S$ in which they participated, treatment assignment $Z$, baseline covariates $X$, and outcome $Y$. Therefore, for each trial $s\in \mathcal S$, the data consist of realizations of independent random tuples $(X_i, S_i = s, Z_i, Y_i)$, $i = 1, \ldots, n_s$, where $n_s$ denotes the total number of randomized individuals in trial $s$. 

We also obtain a simple random sample from the target population of interest. The individuals in this sample are not participating in any of the trials (either because they were not invited or were invited but declined to participate). We collect baseline covariate data from them, but need not collect treatment or outcome data; we discuss the relationship between the covariate distribution in the target population and the collection of trials in the \emph{Identification} section. We use the convention that $S = 0$ for the target population so the data from the sample of the target population consist of realizations of random tuples $(X_i, S_i = 0)$, $i = 1, \ldots, n_0$, where $n_0$ is the total number of sampled individuals. Throughout, we use $I(\cdot)$ to denote the indicator function; for example, $I(S \in \mathcal S)$ is a random variable that denotes participation in any of the trials in the collection $\mathcal S$, such that $I(S \in \mathcal S)$ = 1 when $S \in \mathcal S$; and 0, otherwise. 

We form a composite dataset by appending the data from all the trials and the sample from the target population. The dataset consists of independent realizations of the random tuple $\big (X_i, S_i, I(S_i \in \mathcal S) \times Z_i, I(S_i \in \mathcal S) \times Y_i \big)$, $i=1,\ldots,n$, where $n$ is the total number of observations in the trials and the sample from the target population, $n = \sum_{s=0}^{m} n_s$. Throughout, we use italicized capital letters to denote random variables and lower case letters to denote the corresponding realizations. We generically use the notation $f(\cdot)$ to denote densities. 

We discuss details of the underlying sampling model in eAppendix \ref{appendix:sampling_model}. Briefly, we assume that the observed data are obtained by random sampling from an infinite superpopulation of individuals that is stratified by $S$, with sampling fractions that are unknown and possibly unequal constants for each stratum with $S = s$, for $s \in \{0,1,\ldots, m\}$. We refer to our model as a biased sampling model \cite{bickel1993efficient} because the proportion of individuals in the data with $S = s$, for $s \in \{0,1,\ldots, m\}$, does not in general reflect the population probability of $S = s$, but is instead shaped by the complex process that drives the design and conduct of actual randomized trials (see references \cite{dahabreh2019transportingStatMed, dahabreh2019studydesigns, dahabreh2019efficient} for additional details in the single trial case).

Let $Y^z$ denote the potential (counterfactual) outcome under treatment assignment $z \in \mathcal Z$ \cite{rubin1974, robins2000d}. We are interested in estimating the average treatment effect of treatment assignment (i.e., the intention-to-treat effect) in the target population, $\E [Y^z - Y^{z^\prime} | S = 0]$ for every pair of treatments $z$ and $z^\prime$ in $\mathcal Z$.

\section*{Identification of average treatment effects in the target population using a single trial}\label{sec:identification}

We now consider transporting inferences from each trial to the target population. Derivations for all identification results reported in this section are provided in eAppendix \ref{appendix:identification}. The derivations are valid under the biased sampling model described in eAppendix \ref{appendix:sampling_model} and thus the expectations and probabilities below can be taken over the data law of that model.

\subsubsection*{Identifiability conditions} 

The following are sufficient conditions for identifying the average treatment effect in the target population, $\E[Y^z - Y^{z^\prime} | S = 0]$, using covariate, treatment, and outcome data from a single trial $s^* \in \mathcal S$ and baseline covariate data from the sample of the target population.

\vspace{0.1in}
\noindent
\emph{A1. Consistency of potential outcomes:} If $Z_i = z,$ then $Y^z_i = Y_i$, for every individual $i$ in trial $s^*$ or the target population. Implicit in this notation is the absence of any effect of trial engagement on the outcome \cite{dahabreh2019identification, hernan2011compound} (e.g., no Hawthorne effects).

\vspace{0.1in}
\noindent
\emph{A2. Conditional exchangeability over treatment assignment $Z$ in the trial:} $\E[Y^z | X = x, S =s^* , Z = z ] = \E[Y^z | X = x, S = s^*]$ for each treatment $z \in \mathcal Z$, and each $x$ with $f(x, S = s^*) > 0$. This condition is expected to hold because treatment assignment in each trial is randomized (possibly conditional on $X$).

\vspace{0.1in}
\noindent
\emph{A3. Positivity of the treatment assignment probability in the trial:} For each treatment $z \in \mathcal Z$, $\Pr[Z = z | X = x, S = s^*] > 0$ for every $x$ with $f(x, S  = s^*) > 0$. This condition is also expected to hold because of randomization.

\vspace{0.1in}
\noindent
\emph{A4. Conditional exchangeability in measure between the trial and the target population:} For each pair of treatments $z$ and $z^\prime$ in $\mathcal Z$, $\E[Y^z - Y^{z^\prime} | X = x, S = 0] = \E[Y^z  - Y^{z^\prime} | X = x, S = s^*]$ for every $x$ with $f(x, S =0)>0$. This condition encodes an assumption of no residual effect measure modification by trial participation, conditional on the baseline covariates $X$. It is mathematically weaker than the identifiability conditions used in most previous work on transportability (e.g., \cite{westreich2017}), but it is sufficient to transport inferences about the average treatment effects from trial $s^*$ to the target population \cite{dahabreh2019transportingStatMed}. Exchangeability in measure may be implied by distributional independence conditions, for example, $Y^z \indep I(S = s^*) | X, I(S \in \{0, s^*\}) = 1$, but exchangeability in measure can hold even if such independence conditions do not. Thus, the condition cannot be verified solely using graphical methods such as d-separation-based criteria applied to single world intervention graphs \cite{dahabreh2019identification} or selection diagrams \cite{pearl2014}.

\vspace{0.1in}
\noindent
\emph{A5. Positivity of the probability of participation in the trial:} $\Pr[S = s^* |X = x] > 0$ for every $x$ with $f(x, S = 0) > 0$. Informally, this condition means that, for covariates needed to establish conditional exchangeability in measure, each covariate pattern in the target population should have positive probability of occurring in the trial $s^*$. 

In stating the identifiability conditions, we have used $X$ generically to denote baseline covariates. It is possible however, that strict subsets of $X$ are adequate to satisfy the exchangeability condition for trial $s^*$ and the target population, or conditional exchangeability in measure. For example, if trial $s^*$ is marginally randomized, the mean exchangeability over treatment assignment $A$ among trial participants will hold unconditionally. Also, the identification results we obtain under these conditions apply beyond randomized trials, for example, to pooled analyses of observational studies comparing interventions \cite{blettner1999traditional}, provided that conditions \emph{A2} and \emph{A3} can be assumed to hold. In this setting, condition \emph{A2} corresponds to the usual ``no unmeasured confounding'' assumption \cite{hernan2020}, applied to study $s^*$. 

To focus on issues related to selective trial participation, we assume complete adherence to the trial protocol and no loss-to-follow-up, so that the intention-to-treat effect (i.e., the average effect of treatment assignment) is equal to the per-protocol effect (i.e., the average effect of receiving treatment as indicated in the protocol). If adherence to the protocol is incomplete, the two effects are not equal. The methods we describe here can be extended to account for non-adherence to the trial protocol, but the extensions do not offer additional insights regarding transportability from a collection of trials to a target population. As an example, in eAppendix \ref{appendix:non_adherence}, we demonstrate how the methods can be extended to address incomplete adherence in a two-period study. In the remainder of the main text, for brevity, we use the term ``average treatment effect'' to mean the intention-to-treat average effect in the target population ($S=0$).

\subsubsection*{Identification} 

Under conditions \emph{A1} through \emph{A5}, when transporting inferences comparing treatments $z$ and $z^\prime$ from a trial $s^* \in \mathcal S$ to the target population, the average treatment effect $\E[Y^z - Y^{z^\prime} | S = 0 ]$ equals the following functional of the observed data distribution:
\small
\begin{equation} \label{eq:identification_g_single}
 \psi(z, z^{\prime},s^*) \equiv \E \big[  \E[Y | X, S = s^*, Z = z ] - \E[Y | X, S = s^*, Z = z^\prime ]  \big| S = 0 \big].
\end{equation}
\normalsize

Under the positivity conditions, we can obtain an expression for $\psi(z, z^{\prime},s^*)$ that uses weighting,
\footnotesize
\begin{equation}\label{eq:identification_w_single}
    \begin{split}
  \psi(z, z^{\prime}, s^*) &= \dfrac{1}{\Pr[S = 0 ]}  \E\Bigg[  \left( \dfrac{I(Z = z )}{\Pr[Z = z | X, S = s^*]} - \dfrac{I(Z = z^\prime )}{\Pr[Z = z^\prime | X,S  = s^*]}  \right)  \dfrac{ I(S=s^*) Y \Pr[S = 0| X, I(S \in \{0, s^*\})= 1] }{\Pr[ S = s^*| X, I(S \in \{0, s^*\})= 1]  }  \Bigg].
  \end{split}
\end{equation}
\normalsize

The identification results in this section only involve the target population $S = 0$ and the trial $S = s^*$; thus, in addition to holding under the sampling model of eAppendix \ref{appendix:sampling_model} they also hold under the sampling model described in reference \cite{dahabreh2019studydesigns}.

\section*{Identification when pooling the trials}\label{sec:identification_pooled}

We now consider transporting inferences to the target population when using the data from all trials in the collection $\mathcal S$. Derivations for all identification results reported in this section are also provided in eAppendix \ref{appendix:identification} and are valid under the biased sampling model described in eAppendix \ref{appendix:sampling_model}.

\subsection*{Identifiability conditions} 

The following are sufficient conditions for identifying the average treatment effect in the target population, $\E[Y^z - Y^{z^\prime} | S = 0]$, using covariate, treatment, and outcome data from every trial in the collection $\mathcal S$ and baseline covariate data from the sample of the target population.

\vspace{0.1in}
\noindent
\emph{B1. Consistency of potential outcomes:} If $Z_i = z,$ then $Y^z_i = Y_i$, for every individual $i$ in the target population or the populations underlying the trials in $\mathcal S$. As noted before, implicit in this notation is the absence of any effect of trial engagement on the outcome.

\vspace{0.1in}
\noindent
\emph{B2. Conditional exchangeability over treatment assignment $Z$:} $\E[Y^z | X = x, S =s , Z = z ] = \E[Y^z | X = x, S = s]$ for every trial $s \in \mathcal S$, each treatment $z \in \mathcal Z$, and every $x$ with $f(x, S = s) > 0$.

\vspace{0.1in}
\noindent
\emph{B3. Positivity of the treatment assignment probability in the trials:} For every treatment $z \in \mathcal Z$, $\Pr[Z = z | X = x, S = s] > 0$ for every trial $s \in \mathcal S$ and every $x$ with $f(x, S  = s) > 0$.

\vspace{0.1in}
\noindent
\emph{B4. Conditional exchangeability in measure between the trial and the target population:} For every pair of treatments $z$ and $z^\prime$ in $\mathcal Z$, $\E[Y^z - Y^{z^\prime} | X = x, S = 0] = \E[Y^z  - Y^{z^\prime} | X = x, S = s]$ for every trial $s \in \mathcal S$ and every $x$ with $f(x, S =0)>0$.

\vspace{0.1in}
\noindent
\emph{B5. Positivity of the probability of participation in the trials:} $\Pr[S = s |X = x] > 0$ for every $s \in \mathcal S$ and every $x$ with $f(x, S = 0) > 0$.

\subsection*{Observed data implications}

Suppose now that conditions \emph{B1} through \emph{B5} hold, so that inferences are transportable from each trial $s \in \mathcal S$ to the target population; if effects are transportable only for a subset of the trials in the collection $\mathcal S$ it is easy to modify the results given  here to restrict them to the subset of $\mathcal S$ that satisfies conditions \emph{B1} through \emph{B5}. We will argue that conditions \emph{B1} through \emph{B5}, have important implications for law underlying the observed data. To see this, note that when conditions \emph{B4} and \emph{B5} hold, then for the pair of treatments $z, z^\prime$, and for all $X$ values in the common support with $S = 0$,
\begin{equation*}
    \E[Y^z - Y^{z^\prime}| X, S= 1] = \ldots = \E[Y^z - Y^{z^\prime} | X, S = m] = \E[Y^z - Y^{z^\prime} | X, S = 0].
\end{equation*}
That is, the conditional causal mean difference of each of the $m$ trials in the collection $\mathcal S$ is equal to the conditional causal mean difference in the target population. This means that the conditional average causal effect is independent of study participation in $\mathcal S$, within strata of baseline covariates.  

Using the above result and assumptions \emph{B1} through \emph{B3}, for the pair of treatments $z, z^\prime$ and for all $X$ values in the common support with $S = 0$, we obtain
\begin{equation}\label{eq:chain_observed_data_implications}
  \begin{split}
    &\E[Y | X, S= 1, Z = z] - \E[Y | X, S= 1, Z = z^\prime] = \ldots = \\
    &\quad\quad\quad\quad\quad\quad \E[Y | X, S = m, Z = z] - \E[Y | X, S= m, Z = z^\prime] \equiv \tau(z,z^\prime; X),
  \end{split}
\end{equation}
where we use the notation $\tau(z,z^\prime; X)$ for the common conditional mean difference. 

The above chain of equalities is an observed data implication of conditions \emph{B1} through \emph{B5} because it only involves observed variables. Thus, using the data to evaluate the chain of equalities can be viewed as a falsification test for whether the chain of assumptions \emph{B1} through \emph{B5} fails to hold for the entire collection $\mathcal S$. Nevertheless, such assessment cannot prove that the the assumptions hold for the collection or for any subset of trials. Substantive knowledge should be used to determine whether estimates of the conditional mean differences in the above equation are ``close enough'' across trials. Expert assessment may be aided by statistical testing, because the chain of equalities can be viewed as a null hypothesis to be tested using parametric or non-parametric methods \cite{delgado1993testing, neumeyer2003nonparametric, racine2006testing, luedtke2019omnibus}.

\subsection*{Identification}

Suppose that substantive knowledge suggests that assumptions \emph{B1} through \emph{B5} hold, so that all trials in the collection $\mathcal S$ are transportable to the target population, and that examination of the observed data implications of assumptions \emph{B1} through \emph{B5} does not raise any concern. Then, the average treatment effect $\E[Y^z - Y^{z^\prime} | S = 0 ]$ equals the following functional of the observed data distribution:
\small
\begin{equation} \label{eq:identification_g_collection}
 \phi(z, z^{\prime}) \equiv \E \big[  \tau(z,z^\prime; X)  \big| S = 0 \big].
\end{equation}
\normalsize
Under the positivity conditions, we can obtain an expression for $\phi(z, z^{\prime})$ that uses weighting,
\begin{equation}\label{eq:identification_w_collection}
    \begin{split}
  \phi(z, z^{\prime}) &= \dfrac{1}{\Pr[S = 0 ]}  \E\Bigg[  \omega(z, z^\prime; X, S, Y)  \dfrac{ Y \Pr[S = 0| X] }{\Pr[ I(S \in \mathcal S) = 1| X]  } \Bigg],
  \end{split}
\end{equation}
where $$\omega(z, z^\prime; X, S) =  \left( \dfrac{I(Z = z )}{\Pr[Z = z | X, S, I(S \in \mathcal S) = 1]} - \dfrac{I(Z = z^\prime )}{\Pr[Z = z^\prime | X,S, I(S \in \mathcal S) = 1]}  \right) I(S \in \mathcal S).$$

\subsection*{Relaxing positivity condition \emph{B5} when pooling the trials}

Intuition suggests that by combining information across multiple trials it should be possible to relax positivity condition \emph{B5}, which, as stated above, requires sufficient overlap between the target population and each one of the trials in the collection $\mathcal S$. 

As an example, consider two randomized trials, comparing the same anti-hypertensive treatments, one enrolling patients with mild and the other with severe hypertension, and a target population, $S=0$, that includes individuals with both mild and severe hypertension. It should be possible to obtain inferences about the target population provided one is willing to assume that (1) conditional average treatment effects from each trial are equal to conditional average treatment effects in the corresponding subset of the target population; and (2) every covariate pattern that can occur among individuals in the target population with mild hypertension has positive probability of occurring in the trial of mild hypertension and every covariate pattern that can occur among individuals in the target population with severe hypertension has positive probability of occurring in the trial of severe hypertension.

Using $\mathcal X_s$ to denote the support of the distribution of $X$ given $S = s$, the second condition above can be generalized as follows: the union of the intersections of the support sets of the distribution of $X$ in each trial in $\mathcal S$ and the target population equals the support set of the distribution of $X$ in the target population. Using $\mathcal X_s$ to denote the support of the distribution of $X$ given $S = s$, this more general overlap condition can be written concisely as $\bigcup\limits_{s \in \mathcal S} \mathcal X_s \cap \mathcal X_0 = \mathcal X_0$. Though identification is intuitively obvious in simple cases like the two trial hypertension example, handling identification in general cases when condition \emph{B5} is violated but the overlap condition $\bigcup\limits_{s \in \mathcal S} \mathcal X_s \cap \mathcal X_0 = \mathcal X_0$ holds, requires modifications to condition \emph{B4} and the introduction of additional notation to keep track of the subsets of $\mathcal S$ where each covariate pattern $X = x$ can occur. For that reason, we we do not pursue it here.

\section*{Estimation \& Inference}

\subsection*{Estimation for transporting individual trials}

We now use the identification results on the previous section to propose estimators of average treatment effects; when transporting results from a single randomized trial, these estimators relate to the potential outcome mean estimators of \cite{dahabreh2019transportingStatMed}.

\vspace{0.1in}
\noindent
\emph{Estimation by modeling the conditional average treatment effect:} The first option is to use an estimator based on (\ref{eq:identification_g_single}),
\begin{equation}\label{eq:estimation_g_single}
  \widehat \psi_{\text{\tiny te}}(z, z^\prime, s^*) = \Bigg\{\sum\limits_{i=1}^n I(S_i = 0) \Bigg\}^{-1} \sum\limits_{i=1}^{n} I(S_i = 0)  \widehat b(z, z^\prime,s^*; X_i),
\end{equation}
where $\widehat b(z, z^\prime, s^*; X)$ is an estimator of $\E[Y | X, S = s^*, Z = z] - \E[Y | X, S = s^*, Z = z^\prime]$. A simple way to estimate $\widehat b(z, z^\prime,s^*; X)$ is to use an outcome model-based estimator $\widehat h(z, s^*; X)$ of $\E[Y | X, S = s^*, Z = z]$, for each $z \in \mathcal Z$, and then estimate $\widehat b(z, z^\prime, s^*; X)$ by taking the difference $\widehat h(z,s^*; X) - \widehat h(z^\prime,s^*; X)$. Alternatively, because under assumptions \emph{A1} through \emph{A3}, $\E[Y | X, S = s^*, Z = z] - \E[Y | X, S = s^*, Z = z^\prime] = \E[Y^z - Y^{z^\prime} | X, S = s^*]$, we can use a g-estimator of the conditional average treatment effect in trial $s^*$ \cite{robins1989analysis, robins1992g, vansteelandt2014structural, tian2014simple}. Regardless of the approach, $\widehat \psi_{\text{\tiny te}}(z, z^\prime,s^*)$ converges in probability to $\psi(z, z^\prime,s^*)$ when $\widehat b(z, z^\prime,s^*; X)$ is a consistent estimator of the conditional average treatment effect in trial $s^*$.

\vspace{0.1in}
\noindent
\emph{Estimation by modeling the probability of trial participation and treatment:} A second option is to use a weighting estimator based on (\ref{eq:identification_w_single}), 
\begin{equation}\label{eq:estimation_w_single}
  \widetilde \psi_{\text{\tiny w}}(z, z^\prime, s^*) = \Bigg\{\sum\limits_{i=1}^{n} I(S_i = 0) \Bigg\}^{-1} \sum\limits_{i=1}^{n} \left( \dfrac{I(Z_i = z)}{ \widehat \ell(z, s^*; X_i)} - \dfrac{I(Z_i = z^\prime)}{\widehat \ell(z^\prime, s^*; X_i)} \right) I(S_i = s^*)  \dfrac{ \widetilde p(X_i) }{1 - \widetilde p(X_i)} Y_i ,
\end{equation}
where $\widetilde p(X)$ is an estimator of $\Pr[S = 0 | X, I(S \in \{ 0,  s^*\}) = 1]$; $\widehat \ell(z,s^*; X)$ and $\widehat \ell(z^\prime,s^*; X)$ are estimators for $\Pr[Z = z | X , S = s^*]$ and $\Pr[Z = z^\prime | X , S = s^*]$, respectively.

\subsection*{Estimation for pooling the trials}

\vspace{0.1in}
\noindent
\emph{Estimation by modeling the conditional average treatment effect:} The first option is to use an estimator based on (\ref{eq:identification_g_collection}),
\begin{equation}\label{eq:estimation_g_collection}
  \widehat \phi_{\text{\tiny te}}(z, z^\prime) = \Bigg\{\sum\limits_{i=1}^n I(S_i = 0) \Bigg\}^{-1} \sum\limits_{i=1}^{n} I(S_i = 0)  \widehat t(z, z^\prime; X_i),
\end{equation}
where $\widehat t(z, z^\prime; X)$ is an estimator of the common (across trials) conditional mean difference $\tau(z, z^\prime; X)$. A general way to obtain $\widehat t(z, z^\prime; X)$ is to estimate the parameters of a regression model for the following conditional mean function $$\E \left[ \left( \dfrac{I(Z = z)}{\Pr[Z = z | X, S, I(S \in \mathcal S) = 1]} -  \dfrac{I(Z = z^\prime)}{\Pr[Z = z^\prime | X, S, I(S \in \mathcal S) = 1]} \right) Y \Big| X, I(S \in \mathcal S) = 1  \right],$$ where the probabilities in the denominators of the fractions inside the parenthesis are known (or can be estimated) \cite{robins1989analysis, robins1992g, vansteelandt2014structural, tian2014simple}. This approach does not require the treatment assignment mechanism to be the same across trials. The estimator $\widehat \phi_{\text{\tiny te}}(z, z^\prime)$ converges in probability to $\phi(z, z^\prime)$ when $\widehat t(z, z^\prime; X)$ is a consistent estimator of the (common across trials) conditional average treatment effect in the collection of trials $\mathcal S$.

\vspace{0.1in}
\noindent
\emph{Estimation by modeling the probability of trial participation and treatment:} A second option is to use a weighting estimator based on (\ref{eq:identification_w_collection}), 
\begin{equation}\label{eq:estimation_w_collection}
  \widehat \phi_{\text{\tiny w}}(z, z^\prime) = \Bigg\{\sum\limits_{i=1}^n I(S_i = 0) \Bigg\}^{-1} \sum\limits_{i=1}^{n} \left( \dfrac{I(Z_i = z)}{ \widehat e(z; X_i, S_i)} - \dfrac{I(Z_i = z^\prime)}{\widehat e(z^\prime; X_i, S_i)} \right) I(S_i \in \mathcal S)  \dfrac{ \widehat p(X_i) }{1 - \widehat p(X_i)} Y_i ,
\end{equation}
where $\widehat p(X)$ is an estimator of $\Pr[S = 0 | X]$, and $\widehat e(z; X, S)$ and $\widehat e(z^\prime; X, S)$ are estimators for $\Pr[Z = z | X , S,  I( S \in \mathcal S) = 1]$ and $\Pr[Z = z^\prime | X , S,  I( S \in \mathcal S) = 1]$, respectively. This weighted estimator converges in probability to $\phi(z, z^\prime)$ when $\widehat p(X)$ is a consistent estimator of $\Pr[S = 0 | X]$ and $\widehat e(z; X, S)$ is a consistent estimator of $\Pr[Z = z | X, S, I( S \in \mathcal S) = 1 ]$. Correct specification of a model for $\Pr[Z = z | X, S,  I( S \in \mathcal S) = 1 ]$ is straightforward even if trials have different treatment assignment mechanisms (but may prove more challenging in pooled analyses of observational studies).

\subsection*{Inference}

To construct Wald-style confidence intervals for $\widehat \psi_{\text{\tiny te}}(z, z^\prime, s^*)$, $\widehat  \psi_{\text{\tiny w}}(z, z^\prime,s^*)$, $\widehat  \phi_{\text{\tiny te}}(z, z^\prime)$, or $\widehat \phi_{\text{\tiny w}}(z, z^\prime)$, when using parametric models, we can obtain ``sandwich'' estimators of the sampling variance for each estimator we have described \cite{stefanski2002}. Alternatively, we can use the non-parametric bootstrap \cite{efron1994introduction}. Straightforward inference approaches are possible because under our sampling model the observations are assumed to be independent. We leave extensions of the sampling model to allow for dependence among the individuals sampled into each trial for future work.

\subsection*{Implementation of the estimators}

In eAppendix \ref{appendix:code} we provide a link to \texttt{R} code implementing the estimators described above using the \texttt{geex} package \cite{geex_manual}. Our implementation uses parametric working models: we use estimating equations for binary or multinomial logistic regression models to estimate conditional probabilities; and estimating equations for linear regression models to estimate conditional expectations. All estimating equations can be easily replaced by those appropriate for other generalized linear models \cite{mccullagh1989generalized}, as needed. Because we estimate the working model parameters jointly with the target parameters of all estimators, the sampling variances appropriately account for uncertainty \cite{stefanski2002, lunceford2004}. The code can be modified to address non-adherence (and incomplete follow-up) as needed.

\section*{Violations of conditional exchangeability in measure }

Conditional exchangeability in measure (for individual studies or for the collection $\mathcal S$, as applicable) will often not hold exactly in applications, when some modifiers of the treatment effect are not included in $X$. For example, in the case of transporting inferences from a single trial $s^* \in \mathcal S$, comparing treatments $z$ and $z^\prime$ in $\mathcal Z$, it is possible that for some $x$ with $f(x, S = 0) > 0$, we will have $$ \E[Y^z - Y^{z^\prime} | X = x, S = s^* ] \neq \E[Y^z - Y^{z^\prime} | X = x , S = 0 ],$$ and none of the methods described earlier in this paper will be able to provide valid inferences. Furthermore, substantive experts may disagree on the plausibility of conditional exchangeability in measure in any given application. When conditional exchangeability in measure is implausible or controversial, it is prudent to perform sensitivity analyses to examine how the study conclusions would change under violations of the condition with different magnitudes \cite{robins2000c}. 

\paragraph{When transporting from individual studies in the meta-analysis:} For individual studies in the collection $\mathcal S$, we can perform sensitivity analysis by modifying the methods described in reference \cite{dahabreh2019sensitivitybiascor}. 

\emph{Sensitivity analysis model:} A convenient way to parameterize the sensitivity analysis is to assume the following \emph{sensitivity analysis model}, conditional on baseline covariates:
$$\E[Y^z - Y^{z^\prime} | X, S = 0 ] =   \E[Y^z - Y^{z^\prime} | X, S = s^*]  + u(s^*; X).$$ Here, $u(s^*; X)$ is a user-specified, possibly study-specific and covariate-dependent, \emph{bias function} that expresses the degree of residual effect modification by trial participation, within levels of the measured covariates in $X$ (see \cite{dahabreh2019sensitivitybiascor} for a similar approach to sensitivity analysis when transporting inferences from a single trial and for suggestions for how to choose bias correction functions in applications).

For the choice of $u(s^*; X)$ function for which the sensitivity analysis model holds, by taking expectations, we obtain 
\begin{equation}\label{eq:sensitivity_analysis_model1}
    \begin{split}
    \E[Y^z - Y^{z^\prime} | S = 0 ] &= \E\big[ \E[Y^z - Y^{z^\prime} | X , S = s^*]  + u(s^*; X) \big| S = 0\big] \\
    &= \E\big[ \E[Y^z - Y^{z^\prime} | X , S = s^* ] \big| S = 0 \big ] + \E\big[ u(s^*; X) \big| S = 0\big].
    \end{split}
\end{equation}

Suppose that conditions \emph{A1}, \emph{A2}, \emph{A3}, and \emph{A5} hold, but condition \emph{A4} (conditional exchangeability in measure) may not hold. In such cases, using the result in equation  (\ref{eq:sensitivity_analysis_model1}), we obtain
\begin{equation}\label{eq:sensitivity_analysis_model2}
    \begin{split}
    \E[Y^z - Y^{z^\prime} | S = 0 ] = \psi(z, z^\prime, s^*) + \E\big[ u(s^*; X) \big| S = 0\big].
    \end{split}
\end{equation}

\emph{Estimation and inference for sensitivity analysis:} The result in equation (\ref{eq:sensitivity_analysis_model2}) suggest that sensitivity analysis estimators can be obtained by adding the term $$\left\{ \sum\limits_{i=1}^{n} I(S_i = 0) \right\}^{-1} \sum\limits_{i=1}^{n} I(S_i = 0)  u(s^*; X_i)$$ to the estimators of $ \psi(z, z^\prime, s^*)$ in the previous section. This ``sensitivity analysis term'' converges in probability to $\E\big[ u(s^*; X) \big| S = 0\big]$, under mild conditions. The function $u(s^*; X)$ is not identifiable from the data and thus sensitivity analysis should be performed over a range of different functions. Confidence intervals can again be obtained using either ``sandwich'' estimates of the variance or bootstrap methods. For example, the code we provide can be easily modified to obtain ``sandwich'' estimates of the variance simply by adding the sensitivity analysis term to the summands of the estimating equations.

\paragraph{Sensitivity analysis for transporting from multiple trials:} Sensitivity analysis for transporting inferences from multiple trials would follow the same principles as outlined for the case of individual studies. In the absence of conditional exchangeability in measure from one or more of the trials in the collection $\mathcal S$ trials (i.e., if assumption \emph{B4} does not hold), the observed data implications, and the chain of equalities in \eqref{eq:chain_observed_data_implications} in particular, would not be expected to hold. Furthermore, the choice of bias correction functions for each study $ u(s; X)$ for each trial $s \in \mathcal S$ would be more complicated because the functions would need to satisfy the following constraint (if all trials in the collection are transportable):
\small
\begin{equation*}
    \begin{split}
    &\E[Y | X, S= 1, Z = z] - \E[Y | X, S= 1, Z = z^\prime] + u(1; X) \\
    &\quad\quad\quad= \E[Y | X, S = 2, Z = z] - \E[Y | X, S= 2, Z = z^\prime] + u(2; X) \\
    &\quad\quad\quad= \ldots \\
    &\quad\quad\quad= \E[Y | X, S = m, Z = z] - \E[Y | X, S= m, Z = z^\prime] + u(m; X).
    \end{split}
\end{equation*} \normalsize
We leave the development of methods for harnessing expert knowledge to choose appropriate $u(s; X)$ functions for sensitivity analysis as future work.

\section*{Homogeneity of transported effects}

When investigators are willing to assume (and do not have contrary evidence) that exchangeability in measure holds for two or more trials in the collection $\mathcal S$, the observed data implications about conditional average treatment effects, have additional implications for the transported effects. Suppose that identifiability conditions \emph{B1} through \emph{B5} hold, then it has to be that \small
\begin{equation*}
    \begin{split}
&\E  \big[ \E[Y | X, S = 1, Z =z ]  - \E[Y | X, S = 1, Z =z^\prime ]  \big| S = 0 \big] \\
&\quad=\E \big[ \E[Y | X, S = 2, Z = z ]  - \E[Y | X, S = 2, Z = z^\prime ]  \big| S = 0 \big] \\
&\quad=\ldots \\
&\quad=\E \big[ \E[Y | X, S = m, Z = z ]  - \E[Y | X, S = m, Z = z^\prime ]  \big| S = 0 \big],
    \end{split}
\end{equation*}\normalsize
where, as defined earlier, $m$ is the total number of trials in $\mathcal S$. This implication of conditions \emph{B1} through \emph{B5} can be interpreted as a condition of homogeneity of transported inferences among the trials in $\mathcal S$. 

It is instructive to compare the homogeneity of transported inferences implication against the usual meta-analytic assumption of homogeneity of the study-specific average treatment effects in the same collection of trials,\small
\begin{equation*}
    \begin{split}
&\E  \big[ \E[Y | X, S = 1, Z =z ]  - \E[Y | X, S = 1, Z =z^\prime ]  \big| S = 1 \big] \\
&\quad=\E \big[ \E[Y | X, S = 2, Z = z ]  - \E[Y | X, S = 2, Z = z^\prime ]  \big| S = 2 \big] \\
&\quad=\ldots \\
&\quad=\E \big[ \E[Y | X, S = m, Z = z ]  - \E[Y | X, S = m, Z = z^\prime ]  \big| S = m \big].
    \end{split}
\end{equation*}\normalsize
This chain of equalities represents the null hypothesis adopted for most ``tests for heterogeneity'' used in applied meta-analyses \cite{laird1990some}. The critical difference is this: the homogeneity of transported inferences implication of conditions \emph{B1} through \emph{B5} is a comparison of functionals that involve marginalization over (i.e., standardization to) the same baseline covariate distribution, that of the target population $S=0$. In contrast, the usually homogeneity assumption in meta-analyses is a comparison of functionals that involve marginalization over a different baseline covariate distribution, that of each trial $S = s$. The latter may not be equal, even if assumptions \emph{B1} through \emph{B5} hold, when the distribution of $X$ varies across trials.

\section*{Application of the methods to data from the HALT-C trial}

\subsection*{Using data from a multicenter trial to emulate a meta-analysis}

The HALT-C trial enrolled 1050 patients with chronic hepatitis C and advanced fibrosis who had not responded to previous therapy and randomized them to treatment with peginterferon alfa-2a ($z=1$) versus no treatment ($z=0$). Patients were recrutied in 10 research centers and followed up every 3 months after randomization. We used data on the secondary outcome of platelet count at 9 months of follow-up; we report all outcome measurements as platelets $\times 10^3/$ml. For simplicity, we restricted our analysis to 974 patients with complete baseline covariate and outcome data. 

We used the HALT-C trial data to emulate a meta-analysis: First, we treated the data from the center that contributed the most patients as a sample from the target population ($S=0$; 202 patients). Second, we treated the data from the remaining 9 centers ($S = 1, \ldots, 9$; 772 patients) as if derived from a collection of separate randomized trials ($\mathcal S$). Third, we transported inferences from $\mathcal S$ to $S=0$ using the methods described in this paper. Fourth, we transported inferences from each trial $s$ in the collection $\mathcal S$ to $S=0$ using the methods described in \cite{dahabreh2019transportingStatMed}. 

The benefit of our approach for evaluating the methods is that by re-purposing data from a multi-center trial we have access to information on the randomly assigned treatment and outcome from $S=0$, allowing us to compare analyses using data exclusively from $S=0$ to the results of transportability analyses. These comparisons are informative because, provided the conditions needed for transporting inferences from the collection $\mathcal S$ (or from each center $s \in \mathcal S$) to $S = 0$ hold, we expect that estimates from transportability analyses should agree with the estimates from the analyses that exploit randomization in $S = 0$ (up to sampling variability). 

Our HALT-C analysis was not considered human subjects research because it used de-identified secondary data.

\subsection*{Methods implemented and comparisons}

We applied the estimators $\widehat \phi_{\text{\tiny te}}(1, 0)$ and $\widehat \phi_{\text{\tiny w}}(1, 0)$ to transport inferences from the collection of trials to $S = 0$. We also transported inferences separately from each trial to $S =0$ using the estimators $\widehat \psi_{\text{\tiny te}}(1, 0, s)$ and $\widehat \psi_{\text{\tiny w}}(1, 0, s)$ for each $s \in \mathcal S = \{1, \dots, 9\}$.

We specified parametric working models, as needed for each estimator (binary or multinomial logistic regression models for discrete outcomes, and linear regression models for continuous quantities).All working models used the following baseline covariates as main effects: baseline platelet count, age, sex, previous use of pegylated interferon, race, white blood cell count, history of injected recreational drugs, ever receiving a transfusion, body mass index, creatinine levels, and smoking status (to avoid numerical issues in the smallest centers, when transporting inferences from center 1 alone, we did not consider sex, previous use of pegylated interferon, and creatinine; when transporting inferences from center 9 alone, we did not consider sex, previous use of pegylated interferon, race, and creatinine).

\subsection*{Data analysis results}

The results from the meta-analysis emulation are summarized in Table \ref{tab:HALT_C_results} and graphed in the forest plot \cite{lewis2001forest} of Figure \ref{fig:forest_plot}. When transporting inferences from the collection $\mathcal S$ to $S = 0$, the treatment effect and weighting estimator results were -43.7 and -42.4, respectively (the confidence interval was slightly narrower for the treatment effect estimator). These transportability results agree with the randomization-based analysis from $S = 0$ (using treatment and outcome data only from the target center) that produced an estimate of -45.7. It is interesting to note that the confidence interval for the crude mean difference in $S = 0$ was substantially wider than the confidence intervals for the two transportability estimates from $\mathcal S$ to $S = 0$; this reflects the fact that the collection $\mathcal S$ contains many more individuals with covariate, treatment, and outcome data (772 individuals) compared to the $S = 0$ sample (202 individuals). In this particular dataset, the additional information is evidently enough to overcome any imprecision induced by differences in the covariate distribution between the trials in $\mathcal S$ and $S = 0$). 

In transportability analyses from each center $s \in \mathcal S$ to $S = 0$, point estimates from the treatment effect and weighting estimator were generally similar (and the former had narrower associated confidence intervals). For some centers, transportability analyses produced estimates that were reasonably close to the overall meta-analysis estimate, but the pattern was variable. For example, for $S = 1$, the transportability estimates (-50.9 for the treatment effect estimator; -48.9 for the weighting estimator) were closer to the overall transportability result (-43.7 for the treatment effect estimator; -42.4 for the weighting estimator) and the randomization-based analysis using $S = 0$ data (-45.7), compared to the analysis using only data from $S = 1$ (-29.3). Given the small sample size in some of the centers, analyses transporting inferences one trial at a time may be affected by sampling variability, over-fitting the data from individual centers, or model misspecification (because we relied on relatively simple parametric models).

\section*{Discussion}

In our experience, users of evidence syntheses are not interested in the aggregate of the populations underlying the completed randomized trials, which is often ill-defined because of convenience sampling. Instead, for decision-making, the users typically have a new target population in mind. Traditional approaches to meta-analysis, however, yield pooled effect estimates that are not generally interpretable as the causal effect in that target population of interest.  

Here, we described the conditions under which estimates from individual studies and pooled estimates can be causally interpreted as treatment effects in a target population which is chosen on substantive grounds and may be different from any of the of the populations sampled in the randomized trials. For example, our methods can be used when policy-makers want to examine the implications of a collection of trials for treatment effectiveness in a well-defined target population from which baseline covariate information is routinely collected (e.g., electronic health record data gathered by a healthcare system). The methods can also be used in planning a new trial when two or more related trials are already available: once the new trial's target population is specified, covariate data from it can be collected, and our methods can be used to obtain a treatment effect estimate in the target population to determine the feasibility of the new trial (e.g., through sample size calculations). In the rare case where one of the randomized trials in $\mathcal S$ is representative of the target population, the methods can be modified to treat the trial data as a sample from the target population.

The methods we propose relate to the general theory of causal identification when transporting inferences from multiple trials to a new setting \cite{bareinboim2013, bareinboim2013causal}. We address transportability from multiple trials by considering the underlying sampling model and provide both conditional treatment effect modeling and weighting approaches that can be used to reduce dependence on the specification of models for the conditional average treatment effect (or the conditional outcome mean) in the trials \cite{robins2001}.

Other work on the causal interpretation of meta-analyses has focused on identifiability conditions and their use to examine the presence of heterogeneity across studies \cite{sobel2017causal}; the causal interpretation of meta-analyses using published aggregate (summary) data \cite{kabali2016transportability}; or the use of aggregate data to estimate causal quantities in a ``meta-population'' that contains the individual superpopulations from each study included in a meta-analysis \cite{schnitzer2016causal}. The estimators we propose are different from those used to illustrate a recently proposed framework for causally interpretable meta-analysis \cite{sobel2017causal}; estimation in that work relied on conventional multivariable regression models for conditional average treatment effects. 

A recent unpublished technical report \cite{manski2019meta} addresses the same limitations of conventional meta-analysis methods as our work. Our approach is complementary to that proposed in the report: our identifiability conditions are stronger, but sufficient for point identification of causal effects in the target population (and, as noted earlier, sensitivity analysis can be done); the assumptions in the report appear weaker, but only allow partial (interval) identification. Our estimators allow statistical inference using standard methods (e.g., M-estimation or the bootstrap) and require access to individual-level data. In contrast, we expect that inference about the bounds in the report will often prove challenging because the target quantities are non-smooth and study-level data are limited (most meta-analyses include only a few studies, often with small sample sizes).

Our approach delineates the (non-parametric) identifiability conditions from any additional modeling assumptions that may be needed for estimation, especially when the vector of baseline covariates is high dimensional \cite{robins1997toward}. In this paper, we focused on simple conditional treatment effect and weighting estimators that can be easily implemented in standard statistical packages. A downside of these estimators is that, to produce valid results, the working models each estimator relies on need to be correctly specified. Also, it is possible to obtain doubly robust \cite{bang2005} estimators of the causal quantities of interest.

In summary, we have taken steps towards causally interpretable meta-analysis by considering the identification and estimation of average treatment effects when transporting inferences from a collection of randomized trials to a well-defined target population. To deploy the methods for applied evidence synthesis, extensions will be needed to address failure-time outcomes, systematically missing data, and measurement error in covariates or outcomes.

\clearpage 
\section*{Figure}

\begin{figure}[ht!]
	\centering
	\caption{Forest plot summarizing analyses using the multi-center HALT-C trial data to emulate a meta-analysis.}\label{fig:forest_plot}
	\includegraphics[width=12cm]{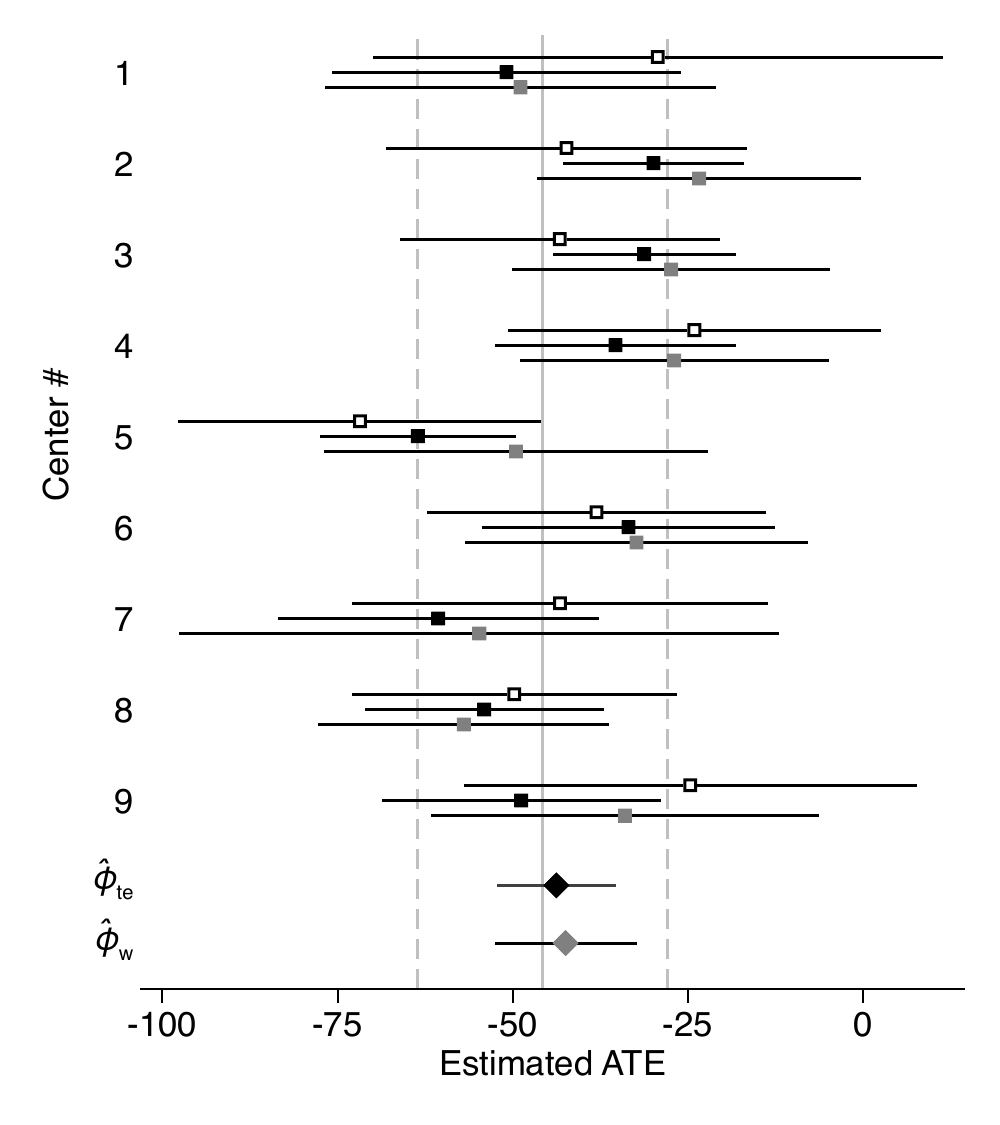}
  \caption*{Point estimates (markers) and 95\% confidence intervals (extending lines) from analyses using the multi-center HALT-C trial data to emulate a meta-analysis. White squares, $\square$ = unadjusted analyses that use data only from each center $s \in \mathcal S = \{ 1, \ldots, 9 \}$; black squares, $\blacksquare$ = transportability analyses using conditional outcome mean modeling and estimator \eqref{eq:estimation_g_single} to transport inferences from each center $S = s, s \in \{ 1, \ldots, 9 \}$ to $S = 0$; gray squares, $\graysquare$ = transportability analyses using inverse odds weighting and estimator \eqref{eq:estimation_w_single} to transport inferences from each center $S = s, s \in \{ 1, \ldots, 9 \}$ to $S = 0$; diamonds denote analyses transporting inferences from the collection of centers $\mathcal S = \{ 1, \ldots, 9 \}$ to S = 0: black diamond, $\blackdiamond$ = treatment effect estimator \eqref{eq:estimation_g_collection}, $\widehat\phi_{\text{\tiny te}}(1,0)$; gray diamond, $\graydiamond$ = weighting estimator \eqref{eq:estimation_w_collection}, $\widehat\phi_{\text{\tiny w}}(1,0)$. The solid vertical line indicates the point estimate and the dashed vertical lines indicate the limits of the confidence interval from the randomization-based analysis using only data from $S=0$.}
	\label{fig:densities}
\end{figure}

\clearpage
\section*{Table}

\begin{table}[ht!]
\centering
\caption{Results from analyses using the multi-center HALT-C trial data to emulate a meta-analysis.}\label{tab:HALT_C_results}
\label{Table_1_results}
\begin{tabular}{ccccc}
\toprule
$S$ & Sample size & Unadjusted & Treatment effect modeling & Weighting \\ \midrule
1 & 48 & -29.3 (-70.0, 11.5)  & -50.9 (-75.8, -25.9) & -48.9 (-76.8, -20.9)    \\
2 & 98 & -42.3 (-68.1, -16.4) & -29.9 (-42.8, -17.0) & -23.3 (-46.5, -0.2) \\
3 & 133 &  -43.3 (-66.1, -20.4) & -31.2 (-44.2, -18.2) & -27.4 (-50.0, -4.7) \\
4 & 69 & -24.0 (-50.6, 2.6)   & -35.3 (-52.5, -18.1) & -26.9 (-49.0, -4.8)  \\
5 & 77 & -71.8 (-97.7, -45.9) & -63.5 (-77.6, -49.5) & -49.5 (-76.9, -22.1)  \\
6 & 110 & -38.0 (-62.2, -13.8) & -33.4 (-54.3, -12.6) & -32.3 (-56.8, -7.8)  \\
7 & 94 & -43.2 (-72.9, -13.5) & -60.6 (-83.5, -37.7) & -54.7 (-97.6, -11.9)    \\
8 & 100 & -49.7 (-72.9, -26.6) & -54.0 (-71.1, -37.0) & -57.0 (-77.7, -36.3) \\
9 & 43 & -24.6 (-57.0, 7.7)   & -48.8 (-68.7, -28.8) & -33.9 (-61.7, -6.2)   \\ \midrule
0 & 202 & -45.7 (-63.6, -27.9) & -43.7 (-52.2, -35.2) & -42.4 (-52.6, -32.3) \\ \bottomrule
\end{tabular}
\caption*{Point estimates and 95\% confidence intervals (in parentheses) from analyses using the multi-center HALT-C trial data to emulate a meta-analysis. Unadjusted results are obtained using the crude mean difference in each trial for all rows. Treatment effect modeling results are obtained using the estimator of equation \eqref{eq:estimation_g_single} in rows 1 through 9 and estimator \eqref{eq:estimation_g_collection} in the bottom row; weighting results are obtained using the estimator of equation \eqref{eq:estimation_w_single} in rows 1 through 9 and estimator \eqref{eq:estimation_w_collection} in the bottom row. In rows 1 through 9, for each trial unadjusted results use only treatment and outcome data from each trial $s \in \mathcal S$; transportability analyses use covariate, treatment, and outcome data from each trial $s \in \mathcal S$ and baseline covariate data from $S = 0$. In the bottom row, the unadjusted analysis uses only treatment and outcome data from the center $S = 0$; transportability analyses use covariate, treatment, and outcome data from all trials in the collection $\mathcal S$ and baseline covariate data from $S = 0$.}
\end{table}

\clearpage
\bibliographystyle{unsrt}
\bibliography{Transporting_multiple_trials}


\clearpage
\setcounter{page}{1}
\appendix
\renewcommand{\appendixname}{eAppendix}

\section{Remarks on the sampling model}\label{appendix:sampling_model}
\renewcommand{\theequation}{A.\arabic{equation}}
\setcounter{equation}{0}

Combining data across multiple data sources requires some sort of sampling model that underlies the observed data and specifies the relationships between the data sources. For example, in ``conventional'' random effects meta-analysis models \cite{dersimonian1986meta}, the true study effects (e.g., the study-specific population-averaged treatment effects underlying each published study) are assumed to be independently sampled from an infinite population of study effects (e.g., \cite{dersimonian1986meta, higgins2009re}).

In this paper we operate under a sampling model that relates to the sampling model we proposed for generalizability and transportability analyses that involve a single randomized trial using non-nested trial designs \cite{dahabreh2019studydesigns}. Specifically, we assume that the research question defines an infinite superpopulation of individuals that can be stratified in several non-overlapping groups: (1) the non-overlapping groups of individuals who belong in the population underlying each trial $s$ in the collection $\mathcal S = \{1, \ldots, m\}$; that is, individuals who receive care or services in each trial's recruitment centers, who meet that trial's eligibility criteria, and who would be invited and agree to participate in the trial; and (2) individuals who belong to the target population of substantive interest, $S = 0$, and who are not invited to participate in any trial or who would decline if invited. Individuals with $S = 0$ may also be a well-defined subset of individuals who were not invited to participate in any trial or who were invited to participate but declined; that is, individuals with $S = 0$ do not need to exhaust the set of individuals not participating in any trial. 

The relative size of subsets of the superpopulation defined by different $S$ values is determined by the scientific question, as well as the trial eligibility criteria, individuals' access to centers recruiting patients into different trials, and individuals' preferences (e.g., desire to participate in a trial). This infinite superpopulation \cite{robins1988confidence} is a convenient fiction \cite{hernan2020}, but it seems to us to be somewhat more plausible than the infinite population of study effects invoked by random effects meta-analyses.  

We assume that the data are generated by random sampling of individuals from the superpopulation, stratified by $S$, with sampling fractions are (1) constant for individuals from the same subset of the superpopulation, $S = s$, for $s\in\{0, 1, \ldots,m\}$; (2) possibly variable across groups defined by different $S$ values; (3) and unknown to the investigators. Constancy of the sampling fraction for individuals with $S = s$ for every $s \in \{0, 1 , \ldots, m\}$ is a reflection of the frequently entertained assumption that individuals participating in a trial can be viewed as a simple random sample from the population underlying the trial; as well at the assumption that the sample of the target population is representative of that population. Variation of the sampling fractions across $S$ values reflects differences in recruitment practices across trials as well as the inability to obtain data from all individuals from the target population. The sampling fractions are unknown to the investigators because they depend on complex mechanisms that drive the design and conduct of randomized trials, as well as the underlying population of each trial. 

The sampling step induces a biased sampling model \cite{bickel1993efficient} in the following sense: in the data, the ratios $\dfrac{n_j}{ n} = \dfrac{n_j}{ n_0 + \sum_{s \in \mathcal S} n_s}$, for $j \in \{0,1, \ldots, m\}$, do not reflect the population probabilities of belonging to the subset of the population with $S = j$, because the sampling fraction from each subset is unknown (and possibly variable between subsets). As a technical condition, we require that as $n \longrightarrow \infty$, $\dfrac{n_j}{ n} \longrightarrow \pi_j > 0$, for $j \in \{0,1, \ldots, m\}.$ Nevertheless, under the biased sampling model, the limiting values, $\pi_j$, are not equal to the superpopulation probabilities $\Pr[S = j]$.

In effect, we view the data as generated by stratified random sampling of the actual population with sampling fractions that only depend on $S$ but are unknown to the investigators. Arguments analogous to those in \cite{dahabreh2019transportingStatMed} and \cite{dahabreh2019studydesigns} show that the identifiability of the causal quantities of interest is unaffected by biased sampling.

\clearpage
\section{Identification of the effect of treatment assignment}\label{appendix:identification}
\renewcommand{\theequation}{B.\arabic{equation}}
\setcounter{equation}{0}

In this eAppendix, we collect results about the identification of average treatment effects of treatment assignment (intention-to-treat average treatment effects) in the target population.

\noindent
\textbf{Transporting from an individual study}

\begin{proposition}\label{prop:gform_SZ}
Under conditions A1 through A5 in the main text, the average treatment effect in the target population, comparing treatments $z$ and $z^\prime$ in $\mathcal Z$, $\E[Y^z - Y^{z^\prime} | S = 0]$, using data from study $s^* \in \mathcal S$ and the target population, is identified by the following functional of the observed data distribution:
\begin{equation} \label{eq:appendix_identification_g_single}
 \psi(z, z^{\prime},s^*) \equiv \E \big[  \E[Y | X, S = s^*, Z = z ] - \E[Y | X, S = s^*, Z = z^\prime ]  \big| S = 0 \big].
\end{equation}
\end{proposition}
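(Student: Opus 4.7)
The plan is to establish the identity by moving from the causal contrast on the left-hand side to the observational functional on the right-hand side in two stages: first exchanging the population using condition \emph{A4} (with support guaranteed by \emph{A5}), and then reducing each potential-outcome conditional mean to a regression on observed variables within trial $s^*$ using \emph{A1}, \emph{A2}, and \emph{A3}. Because \emph{A4} is stated directly for the conditional \emph{difference} of potential outcomes, I would work with the contrast $\E[Y^z - Y^{z^\prime} \mid S=0]$ throughout rather than with each potential outcome mean separately, which keeps the argument clean.

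First I would apply the tower property, conditioning on $X$ within the stratum $S=0$, to write
\begin{equation*}
\E[Y^z - Y^{z^\prime} \mid S = 0] = \E\big[\, \E[Y^z - Y^{z^\prime} \mid X, S = 0] \,\big|\, S = 0 \,\big].
\end{equation*}
Condition \emph{A5} guarantees that the inner expectation is evaluated only at $x$ with $\Pr[S = s^* \mid X = x] > 0$, so the conditional expectation $\E[\,\cdot \mid X=x, S=s^*]$ is well defined wherever it is needed. Invoking \emph{A4} then lets me replace the inner expectation with $\E[Y^z - Y^{z^\prime} \mid X, S = s^*]$, giving
\begin{equation*}
\E[Y^z - Y^{z^\prime} \mid S = 0] = \E\big[\, \E[Y^z - Y^{z^\prime} \mid X, S = s^*] \,\big|\, S = 0 \,\big].
\end{equation*}

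Next I would split the inner difference by linearity and handle each potential-outcome mean separately. For any $z \in \mathcal Z$, condition \emph{A3} makes the event $\{Z=z\}$ occur with positive probability conditional on $(X, S=s^*)$ on the relevant support, so conditioning further on $Z=z$ is valid; condition \emph{A2} then yields $\E[Y^z \mid X, S=s^*] = \E[Y^z \mid X, S=s^*, Z=z]$, and consistency \emph{A1} replaces $Y^z$ by $Y$ inside this expectation. Repeating for $z^\prime$ and collecting terms produces
\begin{equation*}
\E[Y^z - Y^{z^\prime} \mid X, S=s^*] = \E[Y \mid X, S=s^*, Z=z] - \E[Y \mid X, S=s^*, Z=z^\prime],
\end{equation*}
which substituted into the previous display gives exactly $\psi(z, z^\prime, s^*)$.

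The main subtlety, rather than any hard computation, is bookkeeping on the supports: I would need to verify that the combination of \emph{A5} and \emph{A3} ensures every conditional expectation written above is defined on the set of $x$ that receives positive mass under $f(x \mid S=0)$, so that the outer expectation over $X \mid S=0$ integrates a well-defined integrand. A secondary point, handled by appealing to eAppendix~\ref{appendix:sampling_model}, is that all expectations and probabilities can legitimately be taken under the biased sampling law; since each step above only manipulates conditional expectations given $S$ (or given $(X, S)$), and the sampling fractions depend only on $S$, these conditional laws coincide with the corresponding superpopulation conditional laws, and the identification is unaffected by the biased sampling.
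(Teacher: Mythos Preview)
Your proposal is correct and follows essentially the same route as the paper's proof: tower on $X$ within $S=0$, invoke \emph{A4}/\emph{A5} to swap $S=0$ for $S=s^*$ in the inner expectation, then apply \emph{A2}, \emph{A3}, and \emph{A1} to each term to pass to observed conditional means. Your added remarks on support bookkeeping and the biased sampling model are extra detail not spelled out in the paper's terse proof, but the logical skeleton is identical.
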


\noindent
\begin{proof} 
\begin{equation*} 
  \begin{split}
  &\E[Y^z - Y^{z^\prime} | S = 0] \\
    &\quad\quad= \E\big[ \E[Y^z - Y^{z^\prime} | X, S = 0] \big| S = 0 \big] \\
    &\quad\quad= \E\big[ \E[Y^z - Y^{z^\prime} | X, S = s^* ] \big| S = 0 \big] \\
    &\quad\quad= \E\big[ \E[Y^z | X, S = s^* ] - \E[ Y^{z^\prime} | X, S = s^* ] \big| S = 0 \big] \\
    &\quad\quad= \E\big[ \E[Y^z | X, S = s^*, Z = z ] - \E[ Y^{z^\prime} | X, S = s^* , Z = z^\prime] \big| S = 0 \big] \\
    &\quad\quad= \E\big[ \E[Y | X, S = s^*, Z = z ] - \E[ Y| X, S = s^* , Z = z^\prime] \big| S = 0 \big] \\
    &\quad\quad  \equiv \psi(z, z^{\prime},s^*).
  \end{split}
\end{equation*} 
\end{proof}

\clearpage

\begin{proposition}\label{prop:appendix_identification_w_single}
Under positivity conditions A3 and A5, \footnotesize
\begin{equation}\label{eq:appendix_identification_w_single}
    \begin{split}
  \psi(z, z^{\prime}, s^*) &= \dfrac{1}{\Pr[S = 0 ]}  \E\Bigg[  \left( \dfrac{I(Z = z )}{\Pr[Z = z | X, S = s^*]} - \dfrac{I(Z = z^\prime )}{\Pr[Z = z^\prime | X,S  = s^*]}  \right)  \dfrac{ I(S=s^*) Y \Pr[S = 0| X, I(S \in \{0, s^*\})= 1] }{ \Pr[ S = s^*| X, I(S \in \{0, s^*\})= 1]  } \Bigg].
  \end{split}
\end{equation}
\end{proposition}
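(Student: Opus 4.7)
The plan is to start from the g-formula identification result of Proposition B.1 and convert each conditional expectation into an inverse-probability-weighted expression, using A3 for the treatment assignment and A5 for trial participation.

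First, I would use A3 (positivity of the treatment assignment probability in $s^*$) to rewrite each inner regression as an IPW expectation inside the trial: for each $z\in\mathcal Z$,
\begin{equation*}
\E[Y\mid X,S=s^*,Z=z] \;=\; \E\!\left[\frac{I(Z=z)\,Y}{\Pr[Z=z\mid X,S=s^*]}\,\Big|\,X,S=s^*\right].
\end{equation*}
Substituting these into Proposition B.1 expresses the inner difference of regressions as a conditional expectation (given $X$ and $S=s^*$) of the contrast appearing in the parentheses of the target identity, times $Y$.

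Next, I would handle the outer marginalization over $X\mid S=0$ by a change of measure. Writing $\E[g(X)\mid S=0]=\Pr[S=0]^{-1}\E[I(S=0)\,g(X)]$ and using Bayes' rule conditional on $I(S\in\{0,s^*\})=1$, one has the standard inverse-odds identity
\begin{equation*}
\E[I(S=0)\,g(X)] \;=\; \E\!\left[I(S=s^*)\,g(X)\,\frac{\Pr[S=0\mid X,I(S\in\{0,s^*\})=1]}{\Pr[S=s^*\mid X,I(S\in\{0,s^*\})=1]}\right],
\end{equation*}
which requires A5 so that the denominator is strictly positive on the support of $X\mid S=0$. Applying this with $g(X)$ equal to the inner conditional expectation from the previous step turns the outer average into an expectation of $I(S=s^*)$ times the IPW contrast, weighted by the inverse-odds factor.

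Finally, I would combine the two weighting steps, pull the indicator $I(S=s^*)$ inside the parentheses (it commutes with the $I(Z=z)$ indicators), and use the tower property to collapse the iterated expectation into a single unconditional expectation; this produces exactly the right-hand side of \eqref{eq:appendix_identification_w_single}.

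The only real bookkeeping obstacle is justifying that the odds written with the auxiliary conditioning event $I(S\in\{0,s^*\})=1$ is the correct object to use (rather than, say, $\Pr[S=0\mid X]/\Pr[S=s^*\mid X]$). This is immediate once one observes that conditioning on a superset event preserves relative odds:
\begin{equation*}
\frac{\Pr[S=0\mid X,I(S\in\{0,s^*\})=1]}{\Pr[S=s^*\mid X,I(S\in\{0,s^*\})=1]} \;=\; \frac{\Pr[S=0\mid X]}{\Pr[S=s^*\mid X]},
\end{equation*}
so the Bayes manipulation above carries through without change. Everything else in the proof is routine algebra; no additional assumptions beyond A3 and A5 are needed because no exchangeability is invoked at this stage (A1, A2, A4 were already used inside Proposition B.1).
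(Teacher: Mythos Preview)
Your proposal is correct and follows essentially the same route as the paper's proof: rewrite each inner regression as an inverse-probability-of-treatment expectation using A3, convert the outer $S=0$ average into an $I(S=s^*)$-weighted expectation via the inverse-odds change of measure using A5, and then observe that the odds $\Pr[S=0\mid X]/\Pr[S=s^*\mid X]$ is unchanged by conditioning on $I(S\in\{0,s^*\})=1$. The paper carries out these same steps in a slightly different order (introducing $I(S=s^*)/\Pr[S=s^*\mid X]$ first and then pulling $\Pr[S=0\mid X]$ inside), but the argument is the same.
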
 \normalsize

\begin{proof}
We begin from the left-hand-side, using the definition of $\psi(z, z^\prime, s^*)$, \footnotesize
\begin{equation*}
  \begin{split}
\psi(z, z^{\prime}, s^*) &\equiv \E\big[ \E[Y | X, S = s^*, Z = z ] - \E[ Y| X, S = s^* , Z = z^\prime] \big| S = 0 \big] \\
    &= \E\Bigg[ \E\left[ \dfrac{ I(Z = z) Y}{\Pr[Z = z | X, S = s^*]}  \Big| X, S = s^* \right] - \E\left[ \dfrac{I(Z = z^\prime ) Y}{\Pr[Z = z^\prime | X, S = s^*]} \Big| X, S = s^* \right] \Bigg| S = 0 \Bigg] \\
    &= \E\Bigg[ \E\left[ \left( \dfrac{ I(Z = z) }{\Pr[Z = z | X, S = s^*]}  - \dfrac{I(Z = z^\prime )}{\Pr[Z = z^\prime | X, S = s^*]} \right) Y \Big| X, S = s^* \right] \Bigg| S = 0 \Bigg] \\
    &= \E\Bigg[ \E\left[ \left( \dfrac{ I(Z = z) }{\Pr[Z = z | X, S = s^*]}  - \dfrac{I(Z = z^\prime )}{\Pr[Z = z^\prime | X, S = s^*]} \right) \dfrac{I(S = s^*)Y}{\Pr[S = s^* | X]} \Big| X \right] \Bigg| S = 0 \Bigg] \\ 
    &= \dfrac{1}{\Pr[S = 0]} \E\Bigg[ I(S = 0) \E\left[ \left( \dfrac{ I(Z = z) }{\Pr[Z = z | X, S = s^*]}  - \dfrac{I(Z = z^\prime )}{\Pr[Z = z^\prime | X, S = s^*]} \right) \dfrac{I(S = s^*)Y}{\Pr[S = s^* | X]} \Big| X \right] \Bigg] \\ 
    &= \dfrac{1}{\Pr[S = 0]} \E\Bigg[ \Pr[S = 0 | X] \E\left[ \left( \dfrac{ I(Z = z) }{\Pr[Z = z | X, S = s^*]}  - \dfrac{I(Z = z^\prime )}{\Pr[Z = z^\prime | X, S = s^*]} \right) \dfrac{I(S = s^*)Y}{\Pr[S = s^* | X]} \Big| X \right] \Bigg] \\
    &= \dfrac{1}{\Pr[S = 0]} \E\Bigg[  \E\left[ \left( \dfrac{ I(Z = z) }{\Pr[Z = z | X, S = s^*]}  - \dfrac{I(Z = z^\prime )}{\Pr[Z = z^\prime | X, S = s^*]} \right) \dfrac{I(S = s^*)Y \Pr[S = 0 | X] }{\Pr[S = s^* | X]} \Big| X \right] \Bigg] \\
    &= \dfrac{1}{\Pr[S = 0]} \E\Bigg[   \left( \dfrac{ I(Z = z) }{\Pr[Z = z | X, S = s^*]}  - \dfrac{I(Z = z^\prime )}{\Pr[Z = z^\prime | X, S = s^*]} \right) \dfrac{I(S = s^*)Y \Pr[S = 0 | X] }{\Pr[S = s^* | X]}  \Bigg] \\
    &= \dfrac{1}{\Pr[S = 0]} \E\Bigg[   \left( \dfrac{ I(Z = z) }{\Pr[Z = z | X, S = s^*]}  - \dfrac{I(Z = z^\prime )}{\Pr[Z = z^\prime | X, S = s^*]} \right) \dfrac{I(S = s^*)Y \Pr[S = 0 | X, I(S \in \{0, s^*\})= 1] }{\Pr[S = s^* | X, I(S \in \{0, s^*\})= 1]}  \Bigg],
  \end{split}
\end{equation*} \normalsize
where the last step uses the fact that $\dfrac{ \Pr[S = 0 | X] }{\Pr[S = s^* | X]} = \dfrac{ \Pr[S = 0 | X, I(S \in \{0, s^*\})= 1] }{\Pr[S = s^* | X, I(S \in \{0, s^*\})= 1]} $.

\end{proof}

\begin{remark}
The derivation for Proposition \ref{prop:appendix_identification_w_single} does not use any conditions that involve potential outcomes and thus holds whether or not $\psi(z, z^\prime, s^*)$ has a causal interpretation.
\end{remark}

\clearpage
\noindent
\textbf{Transporting the entire collection of trials}

\begin{proposition}\label{prop:appendix_identification_g_collection}
If conditions B1 through B5 from the main text, hold for every trial $s \in \mathcal S$, then the average treatment effect in the target population, comparing treatments $z$ and $z^\prime$ in $ \mathcal Z$, $\E[Y^z - Y^{z^\prime} | S = 0]$, equals the following functional of the observed data distribution:
\begin{equation} \label{eq:appendix_identification_g_collection}
 \phi(z, z^{\prime}) \equiv \E \big[  \tau (z, z^\prime; X)  \big| S = 0 \big],
\end{equation}
where $\tau (z, z^\prime; X)$ is the common (across trials) observed outcome mean difference comparing treatments $z$ and $z^\prime$ conditional on $X$.
\end{proposition}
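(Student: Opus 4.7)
The plan is to reduce Proposition~\ref{prop:appendix_identification_g_collection} to Proposition~\ref{prop:gform_SZ} applied to an arbitrary single trial in $\mathcal S$, and then to use condition \emph{B4} across all trials to show that the single-trial $g$-formula integrand does not depend on which trial was selected; the common value is precisely the function $\tau(z, z^\prime; X)$ introduced in equation~(\ref{eq:chain_observed_data_implications}) of the main text.

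First, I would fix an arbitrary $s \in \mathcal S$ and note that conditions \emph{B1}--\emph{B5}, when restricted to the pair $(S=0, S=s)$, imply conditions \emph{A1}--\emph{A5} with $s^* = s$. Applying Proposition~\ref{prop:gform_SZ} then gives
\begin{equation*}
\E[Y^z - Y^{z^\prime} \mid S = 0] = \E\big[\, \E[Y \mid X, S = s, Z = z] - \E[Y \mid X, S = s, Z = z^\prime] \,\big|\, S = 0 \,\big].
\end{equation*}
This already has the shape of the target identification formula; all that remains is to identify the integrand with the trial-independent function $\tau(z, z^\prime; X)$.

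Next, I would re-derive the chain of equalities in equation~(\ref{eq:chain_observed_data_implications}) on the support of $X$ given $S=0$. For each trial $s \in \mathcal S$, combining \emph{B1}, \emph{B2}, and \emph{B3} yields $\E[Y \mid X, S = s, Z = z] - \E[Y \mid X, S = s, Z = z^\prime] = \E[Y^z - Y^{z^\prime} \mid X, S = s]$, and condition \emph{B4} equates the right-hand side with $\E[Y^z - Y^{z^\prime} \mid X, S = 0]$ for every $x$ with $f(x, S = 0) > 0$. Because this holds for every $s \in \mathcal S$, the observed-data conditional mean difference is a trial-invariant function of $X$ on the relevant support, which by definition is $\tau(z, z^\prime; X)$. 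Substituting back into the single-trial expression and recognizing the definition of $\phi(z, z^\prime)$ delivers equation~(\ref{eq:appendix_identification_g_collection}).

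The proof has no deep step; the main thing to watch is positivity bookkeeping, namely verifying that $\tau(z, z^\prime; X)$ is well-defined $f(x, S=0)$-almost everywhere. Condition \emph{B5} supplies $\Pr[S = s \mid X = x] > 0$ for every $s \in \mathcal S$ and every $x$ in the support of $X$ given $S = 0$, which ensures each conditional expectation $\E[Y \mid X = x, S = s, \cdot\,]$ is defined on that support; condition \emph{B3} then ensures the inner conditioning on $Z = z$ and $Z = z^\prime$ is also legitimate. Once this is in place, the choice of $s$ in the appeal to Proposition~\ref{prop:gform_SZ} drops out, and the identification formula follows without further computation.
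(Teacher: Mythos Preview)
Your proposal is correct and follows essentially the same approach as the paper's proof: both reduce to Proposition~\ref{prop:gform_SZ} for a single trial and use conditions \emph{B1}--\emph{B5} to establish that the conditional mean difference $\E[Y \mid X, S = s, Z = z] - \E[Y \mid X, S = s, Z = z^\prime]$ is the same for all $s \in \mathcal S$ on the support of $X$ given $S=0$, hence equal to $\tau(z, z^\prime; X)$. The only cosmetic difference is ordering---the paper first derives the chain of equalities and then invokes Proposition~\ref{prop:gform_SZ}, whereas you invoke the single-trial result first and then argue trial-invariance of the integrand---but the logical content is identical, and your explicit attention to the positivity bookkeeping is a welcome addition.
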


\begin{proof}
If conditions \emph{B4} and \emph{B5}, hold for every trial $s \in \mathcal S = \{ 1, \ldots, m \}$ then, as shown in the main text, 
\begin{equation*}
    \E[Y^z - Y^{z^\prime}| X, S= 1] = \ldots = \E[Y^z - Y^{z^\prime} | X, S = m] = \E[Y^z - Y^{z^\prime} | X, I(S \in \mathcal S) = 0].
\end{equation*}   

Under conditions 1 through 3, the above result implies, 
\begin{equation*}
    \E[Y | X, S= 1, Z = z] - \E[Y | X, S= 1, Z = z^\prime] = \ldots = \E[Y | X, S = m, Z = z] - \E[Y | X, S= m, Z = z^\prime].
\end{equation*}

Using the notation, $\tau (z, z^\prime; X)$ to denote the common (across trials) observed outcome mean difference comparing treatments $z$ and $z^\prime$ conditional on $X$, and Proposition \ref{prop:gform_SZ}, we obtain 
\begin{equation}
 \phi(z, z^{\prime}) = \E \big[  \tau (z, z^\prime; X)  \big| S = 0 \big].
\end{equation}
\end{proof}

\clearpage
\begin{proposition}\label{prop:appendix_identification_g_SZ_collection}
Under positivity conditions B3 and B5, 
\begin{equation}\label{eq:appendix_identification_w_collection}
    \begin{split}
  \phi(z, z^{\prime}) &= \dfrac{1}{\Pr[S = 0 ]}  \E\Bigg[  \omega(z, z^\prime; X, S)  \dfrac{ Y \Pr[S = 0| X] }{\Pr[ I(S \in \mathcal S) = 1| X]  } \Bigg],
  \end{split}
\end{equation}
where $$\omega(z, z^\prime; X, S) =  \left( \dfrac{I(Z = z )}{\Pr[Z = z | X, S, I(S \in \mathcal S) = 1]} - \dfrac{I(Z = z^\prime )}{\Pr[Z = z^\prime | X,S, I(S \in \mathcal S) = 1]}  \right) I(S \in \mathcal S).$$
\end{proposition}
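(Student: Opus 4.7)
The plan is to start from the right-hand side of the claimed identity and reduce it to $\E[\tau(z, z'; X) \mid S = 0]$, which is $\phi(z, z^\prime)$ by Proposition B.3. As in Proposition B.2, no potential-outcome assumption enters the derivation; positivity conditions B3 and B5 are used only to guarantee that the denominators in $\omega$ and in the outer weight are strictly positive, so the argument is really a weighting identity for the functional $\phi$ in which $\tau(z,z^\prime;X)$ is taken as already defined via the chain of equalities in equation (3).

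First I would apply iterated expectations by conditioning on $X$, pulling the factor $\Pr[S = 0 \mid X] / \Pr[I(S \in \mathcal S) = 1 \mid X]$ outside the inner conditional expectation. Next I would iterate once more, conditioning on $(X, S)$; the indicator $I(S \in \mathcal S)$ inside $\omega$ kills all contributions outside the collection, so only strata with $S \in \mathcal S$ matter. Within any such stratum, condition B3 makes the Horvitz--Thompson step
$$\E\!\left[\frac{I(Z=z)\,Y}{\Pr[Z = z \mid X, S, I(S \in \mathcal S) = 1]} \;\Big|\; X, S, I(S \in \mathcal S) = 1\right] \;=\; \E[Y \mid X, S, Z = z]$$
valid, and the analogous identity holds for $z^\prime$. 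Subtracting the two gives $\E[Y \mid X, S, Z = z] - \E[Y \mid X, S, Z = z^\prime]$, which equals the common value $\tau(z, z^\prime; X)$ by the definition adopted in Proposition B.3; crucially, this value does not depend on $S$ within $\mathcal S$.

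Because the resulting conditional expression is $S$-free, averaging over $S \mid X, I(S \in \mathcal S) = 1$ does nothing, and taking the remaining $\E[I(S \in \mathcal S) \mid X] = \Pr[I(S \in \mathcal S) = 1 \mid X]$ cancels the denominator introduced at the very first step. What remains is $\E\big[\Pr[S = 0 \mid X]\,\tau(z, z^\prime; X)\big]$, which a further iterated-expectation step rewrites as $\E\big[I(S = 0)\,\tau(z, z^\prime; X)\big] = \Pr[S = 0]\,\E[\tau(z, z^\prime; X) \mid S = 0] = \Pr[S = 0]\,\phi(z, z^\prime)$. Dividing by $\Pr[S = 0]$ closes the argument.

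The main obstacle is bookkeeping rather than substance: one must keep straight which conditioning events $\{I(S \in \mathcal S) = 1\}$ are in force at each layer of iteration, and invoke the elementary identity $\Pr[S = s \mid X]/\Pr[I(S \in \mathcal S) = 1 \mid X] = \Pr[S = s \mid X, I(S \in \mathcal S) = 1]$ when reconciling denominators (exactly as in the final step of the proof of Proposition B.2). The only place where a genuine assumption on the data is used is in asserting the $S$-independence of the inner conditional mean difference, which is imported from the observed-data implication chain \eqref{eq:chain_observed_data_implications} underlying the definition of $\tau(z, z^\prime; X)$; everything else is mechanical.
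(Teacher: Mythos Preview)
Your proposal is correct and uses essentially the same ingredients as the paper's proof: the Horvitz--Thompson identity under B3 to recover the trial-specific conditional mean differences, the chain of equalities defining $\tau(z,z^\prime;X)$ to collapse the $S$-dependence, and iterated expectations with the weight $\Pr[S=0\mid X]/\Pr[I(S\in\mathcal S)=1\mid X]$ to convert the unconditional expectation into $\E[\,\cdot\mid S=0]$. The only cosmetic difference is direction: the paper works from $\phi(z,z^\prime)=\E[\tau(z,z^\prime;X)\mid S=0]$ forward to the weighting expression, whereas you start from the weighting expression and reduce it back to $\phi$; the steps are the same in reverse, and your anticipation of the Remark (that no potential-outcome condition is invoked beyond what is baked into the definition of $\tau$) matches the paper exactly.
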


\begin{proof}  
First, it is easy to see that for each trials $s \in \mathcal S$, 
\begin{equation}\label{eq:appendix_chain}
    \begin{split}
        &\E[Y | X, S = s, Z = z] - \E[Y | X, S = s, Z = z^\prime] \\
        &\quad\quad\quad\quad\quad\quad= \E \left[ \left( \dfrac{I(Z = z)}{\Pr[Z = z | X, S = s]} - \dfrac{I(Z = z^\prime)}{\Pr[Z = z^\prime | X, S = s]} \right) Y \Big| X, S = s \right] \\ 
        &\quad\quad\quad\quad\quad\quad= \E \left[ \left( \dfrac{I(Z = z)}{\Pr[Z = z | X, S, I(S \in \mathcal S) = 1]} - \dfrac{I(Z = z^\prime)}{\Pr[Z = z^\prime | X, S, I(S \in \mathcal S) = 1]} \right) Y \Big| X, S = s \right]. 
    \end{split}
\end{equation}

From the derivation of Proposition \ref{prop:appendix_identification_g_collection},
\begin{equation*}
    \E[Y | X, S= 1, Z = z] - \E[Y | X, S= 1, Z = z^\prime] = \ldots = \E[Y | X, S = m, Z = z] - \E[Y | X, S= m, Z = z^\prime],
\end{equation*}
which, combined with the result in \eqref{eq:appendix_chain}, implies that
\begin{equation}
    \begin{split}
        &\E \left[ \left( \dfrac{I(Z = z)}{\Pr[Z = z | X, S, I(S \in \mathcal S) = 1]} - \dfrac{I(Z = z^\prime)}{\Pr[Z = z^\prime | X, S, I(S \in \mathcal S) = 1]} \right) Y \Big| X, S = 1 \right] = \ldots \\
        &\quad\quad\quad\quad\quad\quad= \E \left[ \left( \dfrac{I(Z = z)}{\Pr[Z = z | X, S, I(S \in \mathcal S) = 1]} - \dfrac{I(Z = z^\prime)}{\Pr[Z = z^\prime | X, S, I(S \in \mathcal S) = 1]} \right) Y \Big| X, S = m \right].
    \end{split}
\end{equation}

Let $$\E \left[ \left( \dfrac{I(Z = z)}{\Pr[Z = z | X, S, I(S \in \mathcal S) = 1]} - \dfrac{I(Z = z^\prime)}{\Pr[Z = z^\prime | X, S, I(S \in \mathcal S) = 1]} \right) Y \Big| X, I(S \in \mathcal S) = 1 \right]$$ denote the quantity equal to all terms in the above chain of equations. Because this quantity is equal to each of the conditional mean differences $ \E[Y | X, S= s, Z = z] - \E[Y | X, S= s, Z = z^\prime]$, for every $s \in \mathcal S$, it has to be that $$ \tau(z, z^\prime; X) = \E \left[ \left( \dfrac{I(Z = z)}{\Pr[Z = z | X, S, I(S \in \mathcal S) = 1]} - \dfrac{I(Z = z^\prime)}{\Pr[Z = z^\prime | X, S, I(S \in \mathcal S) = 1]} \right) Y \Big| X, I(S \in \mathcal S) = 1 \right].$$

Using the derivation of Proposition \ref{prop:appendix_identification_g_collection}, we obtain
\begin{equation*}
    \begin{split}
    \phi(z, z^\prime) &= \E[\tau(z, z^\prime; X) | S = 0] \\
    &= \E \left[ \E \left[ \left( \dfrac{I(Z = z)}{\Pr[Z = z | X, S, I(S \in \mathcal S) = 1]} - \dfrac{I(Z = z^\prime)}{\Pr[Z = z^\prime | X, S, I(S \in \mathcal S) = 1]} \right) Y \Big| X, I(S \in \mathcal S) = 1 \right]  \Bigg| S = 0  \right].
    \end{split}
\end{equation*}

Because 
\footnotesize
\begin{equation*}
    \begin{split}
    &\E \left[ \E \left[ \left( \dfrac{I(Z = z)}{\Pr[Z = z | X, S, I(S \in \mathcal S) = 1]} - \dfrac{I(Z = z^\prime)}{\Pr[Z = z^\prime | X, S, I(S \in \mathcal S) = 1]} \right) Y \Big| X, I(S \in \mathcal S) = 1 \right]  \Bigg| S = 0  \right] \\
    &\quad= \dfrac{1}{\Pr[S = 0]} \E \Bigg[ I(S = 0) \E \left[ \left( \dfrac{I(Z = z)}{\Pr[Z = z | X, S, I(S \in \mathcal S) = 1]} - \dfrac{I(Z = z^\prime)}{\Pr[Z = z^\prime | X, S, I(S \in \mathcal S) = 1]} \right) Y \Big| X, I(S \in \mathcal S) = 1 \right]    \Bigg] \\
    &\quad= \dfrac{1}{\Pr[S = 0]} \E \Bigg[ \Pr[S = 0|X] \E \left[ \left( \dfrac{I(Z = z)}{\Pr[Z = z | X, S, I(S \in \mathcal S) = 1]} - \dfrac{I(Z = z^\prime)}{\Pr[Z = z^\prime | X, S, I(S \in \mathcal S) = 1]} \right) Y \Big| X, I(S \in \mathcal S) = 1 \right]    \Bigg] \\
     &\quad= \dfrac{1}{\Pr[S = 0]} \E \Bigg[ \E \left[ \left( \dfrac{I(Z = z)}{\Pr[Z = z | X, S, I(S \in \mathcal S) = 1]} - \dfrac{I(Z = z^\prime)}{\Pr[Z = z^\prime | X, S, I(S \in \mathcal S) = 1]} \right) \dfrac{I(S \in \mathcal S)Y \Pr[S = 0 | X]}{\Pr[I(S \in \mathcal S) = 1 | X]} \Big| X \right]    \Bigg] \\
     &\quad= \dfrac{1}{\Pr[S = 0]} \E \left[ \left( \dfrac{I(Z = z)}{\Pr[Z = z | X, S, I(S \in \mathcal S) = 1]} - \dfrac{I(Z = z^\prime)}{\Pr[Z = z^\prime | X, S, I(S \in \mathcal S) = 1]} \right) \dfrac{I(S \in \mathcal S)Y \Pr[S = 0 | X]}{\Pr[I(S \in \mathcal S) = 1 | X]}     \right],
    \end{split}
\end{equation*} 
\normalsize we conclude that \small
$$\phi(z, z^\prime) = \dfrac{1}{\Pr[S = 0]} \E \left[ \left( \dfrac{I(Z = z)}{\Pr[Z = z | X, S, I(S \in \mathcal S) = 1]} - \dfrac{I(Z = z^\prime)}{\Pr[Z = z^\prime | X, S, I(S \in \mathcal S) = 1]} \right) \dfrac{I(S \in \mathcal S)Y \Pr[S = 0 | X]}{\Pr[I(S \in \mathcal S) = 1 | X]}     \right], $$ \normalsize
which completes the proof.
\end{proof}

\noindent
\begin{remark}
The derivation for Proposition \ref{prop:appendix_identification_g_SZ_collection} does not use any conditions that involve potential outcomes and thus holds whether or not $\phi(z, z^\prime)$ has a causal interpretation.
\end{remark}

\clearpage
\section{Addressing non-adherence in the randomized trials}\label{appendix:non_adherence}
\renewcommand{\theequation}{C.\arabic{equation}}
\setcounter{equation}{0}

In this section we sketch an approach to transporting per-protocol effects from a collection of randomized trials to a target population. We only consider the simple case of a binary adherence indicator (all-or-nothing adherence) and an outcome measured at the end of the study, because the strategy for combining adjustments for non-adherence and transportability in this simple case readily extends to more complicated non-adherence patterns (multiple time periods, more than two treatment recommendations, etc.), using well-known methods for the analysis of the randomized trials with non-adherence \cite{robins2000a, robins2000b}.

\subsection{Setup and notation}

We begin by introducing some notation to represent non-adherence and define per-protocol effects. In each trial $S$, information is collected on baseline covariates $X$, treatment assignment $Z$, post-treatment assignment covariates $L$, treatment received $A$, and the outcome $Y$. As in the main text, lower case letters denote realizations of the corresponding random variables. The set of possible assigned treatments is $\mathcal Z$; the set of possible received treatments is $\mathcal A$. Each specific pair of assigned and received treatment is denoted as $(z, a)$, $z \in \mathcal Z$, $a \in \mathcal A$. 

We are now interested in counterfactual outcomes under joint intervention to assign treatment $z$ and enforce the receipt of treatment $a$; we denote these counterfactual outcomes as $Y^{z,a}$. The average treatment effect comparing two different joint interventions, $(z, a)$ and $(z^\prime, a^\prime)$, in the target population is $\E[Y^{z,a} - Y^{z^\prime, a^\prime} | S = 0 ]$. For example, suppose that we are evaluating a binary treatment; then, in our setup, the most interesting causal contrast is arguably the per-protocol effect in the target population, $\E[Y^{z=1,a=1} - Y^{z^\prime=0,a^\prime=0} | S = 0]$.

\subsection{Identifiability conditions for transporting inferences from a single trial}

We assume that the following identifiability conditions hold for some trial $s^* \in \mathcal S$:

\vspace{0.1in}
\noindent
\emph{C1. Consistency of potential outcomes:} For every individual $i$ in the trials or target population, for every $z \in \mathcal Z$ and every $a \in \mathcal A$, if $Z_i = z$ and $A_i = a$, then $Y_i^{z,a} = Y_i$.

\vspace{0.1in}
\noindent
\emph{C2. Sequential conditional exchangeability for assignment and receipt of treatment in the trial:} for trial $s^* \in \mathcal S$, each treatment assignment $z \in \mathcal Z$, and each treatment received $a \in \mathcal A$, we have that $\E[Y^{z, a} | X = x, S = s^*, Z = z ] = \E[Y^{z, a} | X = x, S = s^*] $ for each $x$ with $f(x, S = s^*) > 0$; and $\E[Y^{z, a} | X = x, S = s^*, Z = z, L = l, A = a ] = \E[Y^{z, a} | X = x, S = s^*,  Z = z, L = l] $ for each $x$ and $l$ with $f(x, S = s^*,  Z = z, l) > 0$. 

\vspace{0.1in}
\noindent
\emph{C3. Sequential positivity of the probability of treatment assignment and receipt of treatment in the randomized trial:} for trial $s^* \in \mathcal S$ we have that $\Pr[Z = z | X = x, S = s^*] >0$ for every $x$ with $f(x, S = s^*) >0$;  and for each treatment $a \in \mathcal A$, $\Pr[A = a | X = x, S = s^*, Z= z, L = l] >0$ for every $x$ and $l$ with $f(x, S = s^*, Z = z, L = l) >0$. 

\vspace{0.1in}
\noindent
\emph{C4. Conditional exchangeability in measure for the per-protocol effect:} for trial $s^*$, for each pair of treatment assignments $z \in \mathcal Z$ and $z^\prime \in \mathcal Z$, and each pair of treatments received $a \in \mathcal A$ and $a^\prime \in \mathcal A$, $$\E[Y^{z,a} - Y^{z^\prime, a^\prime} | X = x, S = 0 ] = \E[Y^{z,a} - Y^{z^\prime, a^\prime} | X = x, S = s^*] $$ for every $x$ with $f(x, S =0)>0$.

\vspace{0.1in}
\noindent
\emph{C5. Positivity of the probability of participation in the trial:} $\Pr[S = s^* |X = x] > 0$ for every $x$ with $f(x, S = 0) > 0$. 

Conditions \emph{C1} through \emph{C3} are ``standard'' conditions used for the analysis of randomized trials with non-adherence \cite{hernan2020}. Condition \emph{C4} is a per-protocol effect version of condition \emph{A4} for intention-to-treat effects given in the main text; and condition \emph{C5} is the same as condition \emph{A5} given in the main text.

\subsection{Identification of per-protocol effects in the target population}

We will now show that under conditions \emph{C1} through \emph{C5} per-protocol effects in the target population, $\E[Y^{z,a} - Y^{z^\prime, a^\prime} | S = 0 ]$, are identifiable using randomized trial data on $(X, S = s^*, Z, L, A, Y)$ from the trial $s^*$, and data on $(X, S = 0)$ from the sample of the target population. 
 
Using conditions \emph{C4} and \emph{C5}, and taking expectations over the conditional distribution of $X$ in the target population, we have 
\begin{equation}\label{eq:identification_per_protocol1}
    \begin{split}
        &\E[Y^{z,a} - Y^{z^\prime, a^\prime} | S = 0 ] \\
        &\quad\quad= \E \big[ \E[Y^{z,a} - Y^{z^\prime, a^\prime} | X, S = 0] \big| S = 0 \big] \\
        &\quad\quad= \E \big[ \E[Y^{z,a} - Y^{z^\prime, a^\prime} | X, S = s^*] \big| S = 0 \big] \\
         &\quad\quad= \E \big[ \E[Y^{z,a}  | X, S = s^*] - \E[ Y^{z^\prime, a^\prime} | X, S = s^*]  \big| S = 0 \big].
    \end{split}
\end{equation}
To complete the identification analysis, note that the only terms involving counterfactual quantities in the last expression of display (\ref{eq:identification_per_protocol1}) are the trial-specific conditional potential outcome means $\E[ Y^{z, a} | X, S = s^*]$, for every $z \in \mathcal Z$ and every $a \in \mathcal A$. Using conditions \emph{C1} through \emph{C3}, we have 
\begin{equation}\label{eq:identification_per_protocol2}
    \begin{split}
\E[Y^{z,a} | X, S = s^*] &= \E[Y^{z,a} | X, S = s^*, Z = z] \\
     &= \E\big[  \E[Y^{z,a} | X, S = s^*, Z = z, L] \big| X, S = s^*, Z = z \big] \\
     &= \E\big[  \E[Y^{z,a} | X, S = s^*, Z = z, L, A = a] \big| X, S = s^*, Z = z \big] \\
     &= \E\big[  \E[Y | X, S = s^*, Z = z, L, A = a] \big| X, S = s^*, Z = z \big].
    \end{split}
\end{equation}

Define $ \theta(z, a, s^*; X) \equiv  \E\big[  \E[Y | X, S = s^*, Z = z, L, A = a] \big| X, S = s^*, Z = z \big]$ for each $z \in \mathcal Z$ and $a \in \mathcal A$. Combining the results from displays (\ref{eq:identification_per_protocol1}) and (\ref{eq:identification_per_protocol2}), we have
\begin{equation*}
    \begin{split}
&\E[Y^{z,a} - Y^{z^\prime, a^\prime} | S = 0 ] = \E \Big[  \theta(z, a, s^*; X) -  \theta(z^\prime, a^\prime, s^*; X)  \Big| S = 0 \Big],
    \end{split}
\end{equation*} 
which establishes the identifiability of per-protocol effects when transporting inferences from a randomized trial to a target population.

\subsection{Transporting inferences from a collection of trials}

Suppose that the identifiability conditions \emph{C1} through \emph{C5} hold for every trial $s \in \mathcal S$. Then, we would have 
\begin{equation}
    \begin{split}
        &\E[Y^{z,a} - Y^{z^\prime, a^\prime} | X, S = 1 ] = \ldots = \E[Y^{z,a} - Y^{z^\prime, a^\prime} | X, S = m] = \E[Y^{z,a} - Y^{z^\prime, a^\prime} | X, S = 0]. 
    \end{split}
\end{equation}

Using the results from the previous section, we obtain
\begin{equation}
    \begin{split}
        & \theta(z, a, 1; X) -  \theta(z^\prime, a^\prime, 1; X) = \ldots =  \theta(z, a, m; X) -  \theta(z^\prime, a^\prime, m; X) = \E[Y^{z,a} - Y^{z^\prime, a^\prime} | X, S = 0].
    \end{split}
\end{equation}
Note that, similar to our results in the main text, the chain of equalities
\begin{equation}
    \begin{split}
        \theta(z, a, 1; X) -  \theta(z^\prime, a^\prime, 1; X) = \ldots =  \theta(z, a, m; X) -  \theta(z^\prime, a^\prime, m; X) \equiv \lambda(z, a, z^\prime, a^\prime; X),
    \end{split}
\end{equation}
is an observed data implication of assuming transportability of the per-protocol effects across the collection $\mathcal S$, and we use the notation $\lambda(z, a, z^\prime, a^\prime; X)$ for that common (across trials) quantity.

Using the law of total expectation, and the identification result for a single trial, we obtain the following identification result for transporting the entire collection of trials $$\E[Y^{z,a} - Y^{z^\prime, a^\prime} | S = 0] = \E[ \lambda(z, a, z^\prime, a^\prime; X) | S = 0 ]. $$

Obtaining a weighting re-epression of these results is also instructive. For the single trial $s^* \in \mathcal S$, and every $z \in \mathcal Z$
\begin{equation}
    \begin{split}
&\theta(z, a, s^*; X) \\
&\quad \equiv  \E\big[  \E[Y | X, S = s^*, Z = z, L, A = a] \big| X, S = s^*, Z = z \big] \\
&\quad \E\left [ \dfrac{I(Z = z, A = a)}{\Pr[Z = z | X, S = s^*]\Pr[A = a| X, S = s^*, Z = z, L ]} Y \Big | X , S = s^* \right] \\
&\quad \E\left [ \dfrac{I(Z = z, A = a)}{\Pr[Z = z | X, S, I(S \in \mathcal S) = 1]\Pr[A = a| X, S, Z = z, L, I(S \in \mathcal S) = 1 ]} Y \Big | X , S = s^*, I(S \in \mathcal S) = 1 \right],
    \end{split}
\end{equation}
and consequently, 
\begin{equation}
    \begin{split}
&\theta(z, a, s^*; X) - \theta(z^\prime, a^\prime, s^*; X) \\
&\quad=   \E\Bigg [ \Bigg(  \dfrac{I(Z = z, A = a)}{\Pr[Z = z | X, S, I(S \in \mathcal S) = 1]\Pr[A = a| X, S, Z = z, L, I(S \in \mathcal S) = 1 ]} \\
&\quad  - \dfrac{I(Z = z^\prime, A = a^\prime)}{\Pr[Z = z^\prime | X, S, I(S \in \mathcal S) = 1]\Pr[A = a^\prime| X, S, Z = z^\prime, L, I(S \in \mathcal S) = 1 ]} \bigg) Y \Bigg | X , S = s^*, I(S \in \mathcal S) = 1 \Bigg].
    \end{split}
\end{equation}

It follows that, if all trials $s \in \mathcal S$ are transportable to the target population, then \small
\begin{equation*}
    \begin{split}
        &\lambda(z, a, z^\prime, a^\prime; X) =   \E\Bigg [ \Bigg(  \dfrac{I(Z = z, A = a)}{\Pr[Z = z | X, S, I(S \in \mathcal S) = 1]\Pr[A = a| X, S, Z = z, L, I(S \in \mathcal S) = 1 ]} \\
&\quad  - \dfrac{I(Z = z^\prime, A = a^\prime)}{\Pr[Z = z^\prime | X, S, I(S \in \mathcal S) = 1]\Pr[A = a^\prime| X, S, Z = z^\prime, L, I(S \in \mathcal S) = 1 ]} \bigg) Y \Bigg | X , I(S \in \mathcal S) = 1 \Bigg]. 
    \end{split}
\end{equation*} \normalsize
Finally, using the above result, the identification result for the collection of trials becomes \small
\begin{equation*}
    \begin{split}
        &\E[Y^{z,a} - Y^{z^\prime, a^\prime} | S = 0] =   \E \vast[ \E\Bigg [ \Bigg(  \dfrac{I(Z = z, A = a)}{\Pr[Z = z | X, S, I(S \in \mathcal S) = 1]\Pr[A = a| X, S, Z = z, L, I(S \in \mathcal S) = 1 ]} \\
&\quad  - \dfrac{I(Z = z^\prime, A = a^\prime)}{\Pr[Z = z^\prime | X, S, I(S \in \mathcal S) = 1]\Pr[A = a^\prime| X, S, Z = z^\prime, L, I(S \in \mathcal S) = 1 ]} \bigg) Y \Bigg | X , I(S \in \mathcal S) = 1 \Bigg] \Biggl| S = 0 \vast],
    \end{split}
\end{equation*} \normalsize
or equivalently, \small
\begin{equation*}
    \begin{split}
        &\E[Y^{z,a} - Y^{z^\prime, a^\prime} | S = 0] =  \dfrac{1}{\Pr[S = 0]}  \E \vast[ \Bigg(  \dfrac{I(Z = z, A = a)}{\Pr[Z = z | X, S, I(S \in \mathcal S) = 1]\Pr[A = a| X, S, Z = z, L, I(S \in \mathcal S) = 1 ]} \\
&\quad  - \dfrac{I(Z = z^\prime, A = a^\prime)}{\Pr[Z = z^\prime | X, S, I(S \in \mathcal S) = 1]\Pr[A = a^\prime| X, S, Z = z^\prime, L, I(S \in \mathcal S) = 1 ]} \bigg) \dfrac{I(S \in \mathcal S) Y \Pr[S = 0 | X]}{\Pr[I(S \in \mathcal S) = 1 | X]}  \vast].
    \end{split}
\end{equation*}
\normalsize

\clearpage
\section{Code to implement the estimators}\label{appendix:code}
We have provided \texttt{R} code that can be modified to implement the methods in a new dataset and a simulated dataset that illustrates the input data structure. The code is available at \url{https://github.com/serobertson/TransportingMultipleTrials}.


\ddmmyyyydate 
\newtimeformat{24h60m60s}{\twodigit{\THEHOUR}.\twodigit{\THEMINUTE}.32}
\settimeformat{24h60m60s}
\begin{center}
\vspace{\fill}\ \newline
\textcolor{black}{{\tiny $ $generalizability\_multiple\_trials, $ $ }
{\tiny $ $Date: \today~~ \currenttime $ $ }
{\tiny $ $Revision: \paperversionmajor.\paperversionminor $ $ }}
\end{center}

\end{document}